\newtheorem{lemma}{{\bf Lemma}}
\newtheorem{theorem}{{\bf Theorem}}
\newcommand{\vect}{\mathrm{vec}}
\newcommand{\real}[1]{\mathrm{I \! R} \mathit{^{#1}}}
\newcommand{\trans}{^{\mbox{\tiny {\sf T}}}}
\newcommand{\Abf}{{\bm A}}
\newcommand{\Bbf}{{\bm B}}
\newcommand{\Dbf}{{\bm D}}
\newcommand{\Fbf}{{\bm F}}
\newcommand{\Hbf}{{\bm H}}
\newcommand{\Ibf}{{\bm I}}
\newcommand{\Jbf}{{\bm J}}
\newcommand{\Kbf}{{\bm K}}
\newcommand{\Mbf}{{\bm M}}
\newcommand{\Rbf}{{\bm R}}
\newcommand{\Vbf}{{\bm V}}
\newcommand{\Xbf}{{\bm X}}
\newcommand{\Ybf}{{\bm Y}}
\newcommand{\Zbf}{{\bm Z}}
\newcommand{\bbf}{{\bm b}}
\newcommand{\sbf}{{\bm s}}
\newcommand{\ubf}{{\bm u}}
\newcommand{\zerobf}{{\mathbf 0}}
\newcommand{\greekbold}[1]{\mbox{\boldmath $#1$}}
\newcommand{\betabf}{\greekbold{\beta}}
\newcommand{\gammabf}{\greekbold{\gamma}}
\newcommand{\mubf}{\greekbold{\mu}}
\newcommand{\Pibf}{\greekbold{\Pi}}
\title{Tensor Generalized Estimating Equations\\
for Longitudinal Imaging Analysis}
\author{Xiang Zhang, Lexin Li, Hua Zhou, Dinggang Shen\\
and the Alzheimer's Disease Neuroimaging Initiative}
\date{}
\begin{document}
\maketitle

\begin{footnotetext}[1]
{Xiang Zhang is Graduate Student, Department of Statistics, North Carolina State University, Raleigh, NC 27695-8203 (Email: xzhang23@ncsu.edu). 
Lexin Li is Associate Professor, Division of Biostatistics, University of California, Berkeley, Berkeley, CA 94720-3370 (Email: lexinli@berkeley.edu).
Hua Zhou is Assistant Professor, Department of Statistics, North Carolina State University, Raleigh, NC 27695-8203 (Email: hua\_zhou@ncsu.edu).
Dinggang Shen is Professor, Department of Radiology, University of North Carolina, Chapel Hill, NC 27599-7420 (E-mail: dinggang\_shen@med.unc.edu).
The Alzheimer's Disease Neuroimaging Initiative: Data used in the preparation of this article were obtained from the ADNI data base (\url{http://adni.loni.usc.edu/}). As such, the investigators within the ADNI contributed to the design and implementation of ADNI and/or provided data but did not participate in the analysis or writing of this article. A complete listing of ADNI investigators is available at: \url{http://www.loni.usc.edu/ADNI/Data/ADNI_Authorship_List.pdf}.
}
\end{footnotetext}

\baselineskip=21pt

\begin{abstract}
In an increasing number of neuroimaging studies, brain images, which are in the form of multidimensional arrays (tensors), have been collected on multiple subjects at multiple time points. Of scientific interest is to analyze such massive and complex longitudinal images to diagnose neurodegenerative disorders and to identify disease relevant brain regions. In this article, we treat those problems in a unifying regression framework with image predictors, and propose tensor generalized estimating equations (GEE) for longitudinal imaging analysis. The GEE approach takes into account intra-subject correlation of responses, whereas a low rank tensor decomposition of the coefficient array enables effective estimation and prediction with limited sample size. We propose an efficient estimation algorithm, study the asymptotics in both fixed $p$ and diverging $p$ regimes, and also investigate tensor GEE with regularization that is particularly useful for region selection. The efficacy of the proposed tensor GEE is demonstrated on both simulated data and a real data set from the Alzheimer's Disease Neuroimaging Initiative (ADNI).
\end{abstract}

\noindent{\bf Key Words:} Alzheimer's disease; generalized estimating equations (GEE); longitudinal imaging data; magnetic resonance imaging (MRI); multidimensional array; tensor regression.
\newpage

\section{Introduction}

Analyzing brain imaging data to study neuropsychiatric and neurodegenerative disorders is gaining increasing interest in recent years \citep[among many others]{Lazar2008,  Friston2009, Hinrichs2009, KangOmbao2012, AstonKirch2012}. There are a variety of forms, or modalities, of images obtained through different imaging technologies, including magnetic resonance imaging (MRI), functional magnetic resonance imaging (fMRI), positron emission tomography (PET), and electroencephalography (EEG), among others. Regardless of image modalities, it is of common scientific interest to use brain images to diagnose neurodegenerative disorders, to predict onset of neuropsychiatric diseases, and to identify disease relevant brain regions or activity patterns. These problems can be collectively formulated as a regression with a clinical outcome and an image predictor, whereas the image takes a unifying form of multidimensional array, also known as \emph{tensor}. 

Early imaging studies typically involved only a handful of subjects. More recently, a number of brain imaging databases are emerging with a relatively large number of study subjects \citep{ADHD-url, ADNI-url}. Meanwhile, in an increasing number of studies, images were acquired for each subject not only at the baseline, but also over multiple visits, resulting in longitudinal images. Our motivating example is a study from the Alzheimer's Disease Neuroimaging Initiative (ADNI). It consists of 88 subjects with mild cognitive impairment (MCI), which is a prodromal stage of Alzheimer's disease (AD). Each subject had MRI scans at 5 different time points: baseline, 6-month, 12-month, 18-month and 24-month. After preprocessing, each MRI image is $32 \times 32 \times 32$ dimensional. Also measured for each subject at each visit was a cognitive score, the Mini-Mental State Examination (MMSE), indicating progression of the disease. It is scientifically important to understand association between MCI/AD and the structural brain atrophy as reflected by MRI. It is equally important to use MRI images to accurately predict AD/MCI, as an accurate diagnosis is critical for timely therapy and possible delay of the disease \citep{ZhangShen2011}. 

While there has been an enormous literature on imaging analysis for AD, most existing methods perform the prediction using only the baseline data, ignoring data at the follow-up time points that often contain useful longitudinal information. Recently, a small group of researchers started to use longitudinal imaging data for individual-based classification \citep{Misra09LongtitudinalPrediction,Davatzikos09Longitudinal,McEvoy11Longitudinal,Hinrichs11Predict} and for cognitive score prediction \citep{ZhangShen2012}, whereas a limited number of studies regressed longitudinal image responses on a collection of covariates, first one voxel at a time then spatially smoothing the parameters \citep{Skup2012,LiShen2013}. In general, longitudinal imaging analysis is challenging, due to both the ultrahigh dimensionality and the complex spatial structure of images, while the longitudinal correlation adds another layer of complication. 

Since the seminal work of \citet{LiangZeger1986}, there has been a substantive literature on statistical analysis of longitudinal data. See \citet{Prentice1991, LiB1997, Qu2000, Xie2003, Balan2005, SongQu2009, Wang2011}, among many others. There is also a line of research studying variable selection for longitudinal models, including \citet{Pan2001, FanLi2004, Ni2010, Xue2010, Wang2012}. However, all those studies take the covariates as a vector, whereas in imaging regression, covariates take the form of multi-dimensional arrays. Naively turning an array into a vector would result in extremely high dimensionality. For instance, a $32 \times 32 \times 32$ MRI image would require $32^3 = 32,768$ parameters. Moreover, vectorization destroys inherent spatial information in images. There have been some recent developments of statistical regression models for image/tensor covariates; for instance, \citet{Caffo2010,ReissOgden2010,Wang2014}. In particular, \citet{ZhouLiZhu2013} proposed a class of tensor regression models by imposing a low rank tensor decomposition on the coefficient tensor. Although those methods directly work with a tensor covariate, none has taken longitudinal tensors into account, and thus none is immediately applicable to our longitudinal imaging study.

In this article, we propose tensor generalized estimating equations for longitudinal imaging analysis. Our proposal consists of two key components: a low rank tensor decomposition and generalized estimating equations (GEE). Similar to \citet{ZhouLiZhu2013}, we choose to impose a low rank structure, the CANDECOMP/PARAFAC (CP) decomposition \citep{KoldaBader09Tensor}, on the coefficient array in GEE. This structure substantially reduces the number of free parameters and makes subsequent estimation and inference feasible. But unlike \citet{ZhouLiZhu2013}, we incorporate this low rank structure in estimating equations to accommodate longitudinal correlation of the data. We have chosen GEE over another popular approach, the mixed effects model, for longitudinal imaging analysis. This is because the GEE approach only requires the first two marginal moments and a working correlation structure for the scalar response variable. By contrast, a mixed effects model requires specification of a distribution for the parameters, which turns out to be a formidable task for a tensor covariate. Within the tensor GEE framework, we develop a scalable computation algorithm for solving the complicated tensor estimating equations. Next we establish the asymptotic properties of the solution of tensor GEE, including consistency and asymptotic normality under two large sample scenarios: the number of parameters is fixed and the number of parameters diverges along with the sample size. In particular, we show that the tensor GEE estimator inherits the robustness feature of the classical GEE estimator, in that the estimate is consistent even if the working correlation structure is misspecified. Finally, we investigate regularization in the context of tensor GEE. Regularization is crucial when the number of parameters far exceeds the sample size, and is also useful for stabilizing estimates and incorporating prior subject knowledge. For instance, employing an $L_1$ penalty in our tensor GEE in effect finds subregions of brains that are highly relevant to the clinical outcome. This region selection is of scientific interest itself, and corresponds to the intensively studied variable selection problem in classical regressions with vector-valued predictors.

Our contributions are two-fold. First of all, our proposal offers a timely response to the increasing availability of longitudinal imaging data along with the growing interest of their analysis. To the best of our knowledge, there has been very few systematic statistical methods developed for such an analysis. Second, our work generalizes both the GEE approach from vector-valued covariates to tensor-valued image covariate, as well as the tensor regression model of \citet{ZhouLiZhu2013} from independent imaging data to longitudinal imaging data. Such a generalization parallels the extension in classical regressions with vector predictors. This extension, however, is far from trivial. Owing to the intrinsic complexity of both spatially and temporally correlated observations as well as the huge data size, longitudinal imaging analysis is much more challenging than both longitudinal analysis with vector-valued predictors and imaging analysis at a single time point. Given that the results of this kind are rare, our proposal offers a useful addition to the literature of both longitudinal and imaging analysis. 

The rest of the article is organized as follows. Section \ref{sec:tgee} proposes tensor GEE for longitudinal imaging data, along with their estimation and regularization. Section \ref{sec:theory} presents the asymptotic results for the tensor GEE estimates. Simulation studies and real data analysis are carried out in Sections \ref{sec:simulations} and \ref{sec:realdata}, respectively, followed by a discussion in Section \ref{sec:discussion}.

\section{Tensor Generalized Estimating Equations}
\label{sec:tgee}

\subsection{Notations and Preliminaries}
\label{sec:notation}

Suppose there are $n$ training subjects, and for the $i$-th subject, there are observations over $m_i$ time points. For simplicity, we assume $m_i=m$ and the time points are the same for all subjects. The observed data consist of $\{(Y_{ij}, \Xbf_{ij}, \Zbf_{ij}), i=1, \ldots, n, j=1, \ldots, m \}$, where, for the $i$-th subject at the $j$-th time point, $Y_{ij}$ denotes the target response, $\Zbf_{ij}\in \real{p_0}$ is a conventional predictor vector, and $\Xbf_{ij} \in \real{p_1\times \dots \times p_D}$ is a $D$-dimensional array that represents the image covariate. The array dimension $D$ depends on the image modality. With an image at a single time point, for EEG, $D=2$, for MRI and PET, $D=3$, and for fMRI, $D=4$. Write $\Ybf_i = (Y_{i1}, \dots, Y_{im})\trans$. A key attribute of longitudinal data is that the observations from different subjects are commonly assumed independent, but the observations from the same subject are \emph{correlated}. That is, the intra-subject covariance matrix, $\textrm{Var}(\Ybf_i) \in \real{m \times m}$, is not a diagonal matrix but with some structure. 

Next we review some key notations and operations of multidimensional array that will be used throughout this article. The \emph{inner product} between two tensors is defined as $\langle \Bbf,\Xbf \rangle = \langle \vect \Bbf, \vect \Xbf \rangle = \sum_{i_1,\ldots,i_D} \beta_{i_1\ldots i_D} x_{i_1 \ldots i_D}$, where the \emph{$\vect(\Bbf)$ operator} stacks the entries of a tensor $\Bbf \in \real{p_1 \times \cdots \times p_D}$ into a column vector. The \emph{outer product}, $\bbf_1 \circ \bbf_2 \circ \cdots \circ \bbf_D$, of $D$ vectors $\bbf_d \in \real{p_d}$ is a $p_1 \times \cdots \times p_D$ array with entries $(\bbf_1 \circ \bbf_2 \circ \cdots \circ \bbf_D)_{i_1 \cdots i_D} = \prod_{d=1}^D b_{di_d}$. The \emph{mode-$d$ matricization}, $\Bbf_{(d)}$, flattens a tensor $\Bbf$ into a $p_d \times \prod_{d' \ne d} p_{d'}$ matrix such that the $(i_1,\ldots,i_D)$ element of the array $\Bbf$ maps to the $(i_d,j)$ element of the matrix $\Bbf_{(d)}$, where $j = 1 + \sum_{d'\ne d} (i_{d'}-1) \prod_{d''<d',d'' \ne d} p_{d''}$.

A tensor $\Bbf \in \real{p_1 \times \cdots \times p_D}$ is said to admit a \emph{rank-$R$ CANDECOMP/PARAFAC (CP) decomposition} \citep{KoldaBader09Tensor}, if
\begin{eqnarray}
\Bbf = \sum_{r=1}^R \betabf_1^{(r)} \circ \cdots \circ \betabf_D^{(r)},   \label{eqn:R-CP-decomp}
\end{eqnarray}
where $\betabf_d^{(r)} \in \real{p_d}, d=1,\ldots,D,r=1,\ldots,R$, are all column vectors, and $\Bbf$ cannot be written as a sum of less than $R$ outer products. The decomposition (\ref{eqn:R-CP-decomp}) is often represented by a shorthand, $\Bbf = \llbracket \Bbf_1,\ldots,\Bbf_D \rrbracket$,  where $\Bbf_d = [\betabf_d^{(1)}, \ldots, \betabf_d^{(R)}] \in \real{p_d \times R}$. If a tensor $\Bbf \in \real{p_1 \times \cdots \times p_D}$ admits a rank-$R$  decomposition (\ref{eqn:R-CP-decomp}), then
\begin{eqnarray*}
\Bbf_{(d)} &= \Bbf_d (\Bbf_D \odot \cdots \odot \Bbf_{d+1} \odot \Bbf_{d-1} \odot \cdots \odot \Bbf_1) \trans  \, \text{ and}~~
\vect \, \Bbf    =   (\Bbf_D \odot \cdots \odot \Bbf_1) {\bf 1}_{R},
\end{eqnarray*}
where $\odot$ denotes the \emph{Khatri-Rao product} \citep{RaoMitra71GenInv} of two matrices $\Bbf_{d} \in \real{p_{d} \times r}$ and $\Bbf_{d'} \in \real{p_{d'} \times r}$ such that $\Bbf_{d} \odot \Bbf_{d'} = \left[\betabf_{d}^{(1)} \otimes \betabf_{d'}^{(1)} \; \betabf_{d}^{(2)} \otimes \betabf_{d'}^{(2)} \; \ldots \; \betabf_{d}^{(R)} \otimes \betabf_{d'}^{(R)} \right] \in \real{p_d p_{d'} \times R}$, and $\otimes $ denotes the \emph{Kronecker product}.

\subsection{Tensor Generalized Estimating Equations}
\label{sec:tensor-gee}

The GEE method has been widely employed for analyzing correlated longitudinal data since the pioneer work of \citet{LiangZeger1986}. It requires specification of the first two moments of the conditional distribution of the response given the covariates, $\mu_{ij} = E(Y_{ij} | \Xbf_{ij}, \Zbf_{ij})$ and $\sigma^2_{ij}=\text{Var}(Y_{ij} | \Xbf_{ij}, \Zbf_{ij})$. Following \citet{LiangZeger1986}, we assume $Y_{ij}$ is from an exponential family with canonical link. Then
\begin{eqnarray*}
\mu_{ij}(\Bbf, \gammabf) = \mu(\theta_{ij}), \; \textrm{ and } \;
\sigma^2_{ij}(\Bbf, \gammabf) = \phi \mu^{(1)}(\theta_{ij}), \quad i=1, \ldots, n, \; j=1, \ldots, m, 
\end{eqnarray*}
where $\mu(\cdot)$  is a differentiable canonical link function, $\mu^{(1)}(\cdot)$ is its first derivative, $\theta_{ij}$ is the linear systematic part, and $\phi$ is an over-dispersion parameter. In this article we simply set $\phi=1$ while the extension to a general $\phi$ is straightforward. $\theta_{ij}$ is associated with the covariates via the relation
\begin{eqnarray} \label{eqn:systematic-part-general}
\theta_{ij} = \gammabf\trans \Zbf_{ij} + \langle \Bbf, \Xbf_{ij} \rangle,
\end{eqnarray}
where $\gammabf$ is the coefficient vector associated with the covariate vector $\Zbf$, including the intercept, and $\Bbf$ is the coefficient tensor of the same size as $\Xbf$ that captures effects of every array element of $\Xbf$. 

The GEE estimator of $\Bbf, \gammabf$ is then defined as the solution of
\begin{eqnarray} \label{eqn:gee}
\sum_{i=1}^n \left\{ \frac{\partial \mubf_i(\Bbf, \gammabf)}{\partial [\vect(\Bbf)\trans, \gammabf\trans]\trans} \right\}\trans \Vbf_i^{-1} \bigl\{ \Ybf_i - \mubf_i(\Bbf, \gammabf) \bigr\} = \zerobf,
\end{eqnarray}
where $\Ybf_i = (Y_{i1}, \dots, Y_{im})\trans$, $\mubf_i(\Bbf, \gammabf) = [\mu_{i1}(\Bbf, \gammabf), \ldots, \mu_{im}(\Bbf, \gammabf)]\trans$, and $\Vbf_i=\text{cov}(\Ybf_i)$ is the response covariance matrix of the $i$-th subject.  The first component in (\ref{eqn:gee}) is the derivative of $\mubf_i(\Bbf, \gammabf)$ with respect to the vector $[\vect(\Bbf)\trans, \gammabf\trans]\trans \in \real{p_0 + \prod_d p_d}$. As such, there are totally $p_0 + \prod_d p_d$ estimating equations to solve in (\ref{eqn:gee}). For regression with image covariates, this dimension is ultrahigh and usually far exceeds the sample size. For instance, for a regression with a $32 \times 32 \times 32$ MRI image predictor, an intercept, and two additional scalar covariates, the number of equations to solve is in the scale of $32^3 + 3 = 327, 71$, resulting no unique solution when the sample size is only in hundreds. It thus becomes crucial to reduce the number of estimating equations. 

Toward that end, we impose a low rank structure on the coefficient array $\Bbf$. More specifically, we assume $\Bbf$ in model (\ref{eqn:systematic-part-general}) follows a CP structure in (\ref{eqn:R-CP-decomp}), $\Bbf = \llbracket \Bbf_{1},\ldots,\Bbf_{D} \rrbracket$,  where $\Bbf_d = [\betabf_d^{(1)}, \ldots, \betabf_d^{(R)}] \in \real{p_d \times R}$. Then the systematic part in (\ref{eqn:systematic-part-general}) becomes
\begin{eqnarray}
\theta_{ij}  & = & \gammabf \trans \Zbf_{ij} +  \langle \sum_{r=1}^R \betabf_1^{(r)} \circ \cdots \circ \betabf_D^{(r)},\Xbf_{ij}  \rangle  \nonumber \\
& = & \gammabf \trans \Zbf_{ij} + \langle (\Bbf_D \odot \cdots \odot \Bbf_1) {\bf 1}_{R} , \vect \Xbf_{ij} \rangle.
\label{eqn:r-tensorreg}
\end{eqnarray}
Adopting (\ref{eqn:r-tensorreg}), we propose the \emph{tensor generalized estimating equations} estimator of $\Bbf, \gammabf$, defined as the solution of 
\begin{eqnarray} \label{eqn:tgee}
\sum_{i=1}^n \left\{ \frac{\partial \mubf_i(\Bbf, \gammabf)}{\partial [\betabf_\Bbf\trans, \gammabf\trans]\trans} \right\}\trans \Vbf_i^{-1} \bigl\{ \Ybf_i - \mubf_i(\Bbf, \gammabf) \bigr\} = \zerobf,
\end{eqnarray}
where $\betabf_\Bbf = \vect(\Bbf_1, \ldots, \Bbf_D)$, and the subscript {\scriptsize $\Bbf$} is to remind that $\betabf$ is constructed based on the CP decomposition of a given coefficient tensor $\Bbf = \llbracket \Bbf_1,\ldots,\Bbf_D \rrbracket$. Comparing to the classical GEE (\ref{eqn:gee}), the derivative is now with respect to $[\betabf_\Bbf\trans, \gammabf\trans]\trans \in \real{p_0 + R \sum_d p_d}$. Consequently, the number of estimating equations has reduced from the exponential order $p_0 + \prod_d p_d$ to the linear order $p_0 + R \sum_d p_d$. This substantial reduction in dimensionality, as we will demonstrate later, enables effective estimation and inference, and also provides a sound recovery of both low rank and high rank signals.

Examining (\ref{eqn:tgee}), the true intra-subject covariance structure $\Vbf_i$ is usually unknown in practice. The classical GEE adopts a working covariance matrix, specified through a working correlation matrix $\Rbf$. That is, $\Vbf_i = \Abf^{1/2}_i(\Bbf, \gammabf) \Rbf \Abf^{1/2}_i(\Bbf, \gammabf)$, where $\Abf_i(\Bbf, \gammabf)$ is an $m \times m$ diagonal matrix with $\sigma^2_{ij}(\Bbf,\gammabf)$ on the diagonal and $\Rbf$ is the $m$-by-$m$ working intra-subject correlation matrix. Some commonly used correlation structures include independence, autocorrelation (AR), compound symmetry, and unstructured correlation, among others. The correlation matrix $\Rbf$ may involve additional parameters, which can be estimated using residual-based moment method. 

By both adopting this working covariance/correlation idea, and explicitly evaluating the derivative in (\ref{eqn:tgee}), we finally arrive at the formal definition of the tensor GEE estimator, which is the solution $(\widehat \Bbf, \hat \gammabf)$ of the following estimating equations
\begin{equation} \label{eqn:tgee-corr}
\sum_{i=1}^n
\begin{pmatrix}
[\Jbf_1 \ldots \Jbf_D]\trans \vect(\Xbf_i) \\
(\Zbf_{i1}, \ldots, \Zbf_{im})
\end{pmatrix}
\Abf^{1/2}_i(\Bbf, \gammabf){\bf \widehat{R}}^{-1}{\Abf}^{-1/2}_i(\Bbf, \gammabf) \bigl\{ \Ybf_i - \mubf_i(\Bbf, \gammabf) \bigr\} = \zerobf,
\end{equation}
where $\widehat{\Rbf}$ is an estimated correlation matrix, $\vect(\Xbf_i) = (\vect(\Xbf_{i1}), \ldots, \vect(\Xbf_{im}))$ is a $\prod_{d=1}^D p_d \times m$ matrix, $\Jbf_d$ is the $\prod_{d=1}^D p_d \times  p_d R$ Jacobian matrix of the form $\Pibf_d [(\Bbf_D \odot \cdots \odot \Bbf_{d+1} \odot \Bbf_{d-1} \odot \cdots \odot \Bbf_1) \otimes \Ibf_{p_d}]$, where $\Pibf_d$ is the $(\prod_{d=1}^D p_d)$-by-$(\prod_{d=1}^D p_d)$ permutation matrix that reorders $\vect \Bbf_{(d)}$ to obtain $\vect \Bbf$, i.e., $
    \vect \Bbf = \Pibf_d \, \vect \Bbf_{(d)}$. Note that $\mu^{(1)}(\theta_{ij})$ has been canceled by the diagonal of the matrix $\Abf^{-1}_i$ due to the property of canonical link. For ease of presentation, we denote the left hand side of equation (\ref{eqn:tgee-corr}) as $\sbf(\Bbf, \gammabf)$, and write the tensor GEE (\ref{eqn:tgee-corr}) as $\sbf(\Bbf, \gammabf) = \zerobf$.

\subsection{Estimation}

Directly solving the tensor generalized estimating equations (\ref{eqn:tgee-corr}) with respect to $(\Bbf, \gammabf)$ can be computational intensive, as the mean function of the response given the covariates is nonlinear in the parameters and the Jacobian matrices $\Jbf_1, \ldots, \Jbf_D$ also depend on the unknown parameters. We propose to iteratively solve the sub-GEE for $\Bbf_1, \ldots, \Bbf_D$, along with $\gammabf$, one at a time, while keeping all other components fixed. When updating $\Bbf_d \in \real{p_d \times R}$, the systematic part $\theta_{ij}(\Bbf,\gammabf)$ can be rewritten as
\begin{eqnarray*}
\theta_{ij}(\Bbf,\gammabf) & = & \gammabf\trans \Zbf_{ij} + \langle \Bbf, \Xbf_{ij} \rangle \\ 
& = & \gammabf\trans \Zbf_{ij} + 
\langle \Bbf_d, \Xbf_{ij(d)} (\Bbf_D \odot \cdots \odot \Bbf_{d+1} \odot \Bbf_{d-1} \odot \cdots \odot \Bbf_1) \rangle,
\end{eqnarray*}
where $\Xbf_{ij(d)}$ is the mode-$d$ matricization of the tensor $\Xbf_{ij}$. As such, the systematic part $\theta_{ij}(\Bbf, \gammabf)$ becomes linear in $\Bbf_d$. The Jacobian matrix $\Jbf_d$ is free of $\Bbf_d$ and depends on the covariates and fixed parameters only. Consequently, each step reduces to a standard GEE problem with $Rp_d$ parameters, which can be solved using standard statistical softwares.

A problem of practical interest is to choose the rank $R$ for $\Bbf$ in its CP decomposition. This can be viewed as a \emph{model selection} problem. \citet{Pan2001} proposed a quasi-likelihood independence model criterion for the classical GEE model selection, by evaluating the likelihood under the independence working correlation assumption. In our tensor GEE setup,  we use the following BIC-type information criterion
\begin{equation} \label{eqn:bic}
\text{BIC}(R)=-2 \ell(\widehat \Bbf(R), \hat \gammabf; \Ibf_m)+\log(n)p_e,
\end{equation}
where $\ell(\widehat \Bbf(R), \hat \gammabf; \Ibf_m)$ is the log-likelihood evaluated at the tensor GEE estimator $\hat \gammabf$ and $\widehat \Bbf(R)$ with a working rank $R$ and the independence working correlation structure $\Ibf_m$. For simplicity, we call this criterion BIC, as the term $\log(n)$ is used. Because the CP decomposition itself is not unique, but can be made so under some minor conditions \citep{ZhouLiZhu2013}, the actual number of estimating equations, or the effective number of parameters, is of the form: $p_e = R(p_1+p_2)-R^2$ for $D=2$, and $p_e = R(\sum_d p_d - D+1)$ for $D>2$. We choose $R$ that minimizes this criterion among a series of working ranks. We will briefly illustrate its use in Section~\ref{sec:signal}.

\subsection{Regularization}
\label{sec:regularization}

Even after introducing a low rank structure in our tensor GEE, regularization can still be useful, as the number of subjects is often limited in a neuroimaging study. In this section, we consider a general form of regularized tensor GEE that includes a variety of penalty functions. Then in Section~\ref{sec:lasso}, we will illustrate with a lasso penalty that is capable of identifying sub-regions of brains associated with the clinical outcome. Specifically, we consider the following regularized tensor GEE
\begin{align*}
    \sbf(\Bbf,\gammabf) + \begin{pmatrix} \partial_{\beta_{11}^{(1)}} P_\lambda(|\beta_{11}^{(1)}|,\rho) \\ \vdots \\ \partial_{\beta_{di}^{(r)}} P_\lambda(|\beta_{di}^{(r)}|,\rho) \\ \vdots \\ \partial_{\beta_{Dp_D}^{(R)}} P_\lambda(|\beta_{Dp_D}^{(R)}|,\rho) \end{pmatrix} = \zerobf_{p_e}, 
\end{align*}
where $P_\lambda(|\beta|,\rho)$ is a scalar penalty function, $\rho$ is the penalty tuning parameter,  $\lambda$ is an index for the penalty family, $\partial_\beta P_\lambda(|\beta|,\rho)$ is the subgradient with respect to argument $\beta$, and the subscript $p_e$ of $\zerobf$ is a reminder of the number of estimating equations to solve. Some widely used penalties include: power family \citep{FrankFriedman93Bridge},  in which $P_\lambda(|\beta|,\rho) = \rho |\beta|^\lambda$, $\lambda \in (0,2]$, and in particular lasso \citep{Tibshirani96Lasso} ($\lambda=1$) and ridge ($\lambda=2$); elastic net \citep{ZouHastie05Enet}, in which $P_\lambda(|\beta|, \rho) = \rho [(\lambda-1) \beta^2/2 + (2-\lambda) |\beta|], \lambda \in [1,2]$; and SCAD \citep{FanLi01SCAD}, in which $\partial / \partial |\beta| P_{\lambda}(|\beta|, \rho) = \rho \left\{ 1_{\{|\beta| \le \rho\}} + (\lambda \rho - |\beta|)_+ /(\lambda-1)\rho 1_{\{|\beta| > \rho\}} \right\}$, $\lambda>2$, among many others.

Thanks to the separability of parameters in the regularization term, the alternating updating strategy still applies. When updating $\Bbf_d$, we solve the penalized sub-GEE
\begin{eqnarray}
    \sbf_d(\Bbf_d) + \begin{pmatrix} \partial_{\betabf_{d1}^{(1)}} P_\lambda(|\beta_{d1}^{(1)}|,\rho)  \\ \vdots \\ \partial_{\betabf_{di}^{(r)}} P_\lambda(|\beta_{di}^{(r)}|,\rho) \\ \vdots \\ \partial_{\betabf_{dp_d}^{(R)}} P_\lambda(|\beta_{dp_D}^{(R)}|,\rho)  \end{pmatrix} = \zerobf_{R p_d},	\label{eqn:sub-pen-GEE}
\end{eqnarray}
where $\sbf_d$ is the sub-estimation equation for block $\Bbf_d$, and there are $R p_d$ equations to solve at this step. Anti-derivative of $\sbf_d$ is recognized as the loss of an Aitken linear model with block diagonal covariance matrix. Thus after linear transformation of $\Ybf_i$ and the working design matrix, solution to \eqref{eqn:sub-pen-GEE} is same as the minimizer of a regular penalized weighted least squares problem, for which many software packages exist. The fitting procedure boils down to alternating penalized weighted least squares problem.

\section{Theory}
\label{sec:theory}

In this section, we study the asymptotic properties of the unregularized tensor GEE estimator as the number of subjects $n$ goes to infinity, while we assume the true rank of the tensor coefficient is known. We investigate two scenarios: the number of parameters is fixed in Section~\ref{sec:asy-fixed}, and the number of parameters diverges in Section~\ref{sec:asy-diverge}. For ease of exposition, we omit the vector-valued covariates $\Zbf$ and the associated parameters $\gammabf$, while the results can be easily extended to incorporate them. Our development builds upon and extends the previous work of \citet{Xie2003, Balan2005, Wang2011} from classical vector GEE to tensor GEE, while we spell out the similarity as well as difference in asymptotics when comparing the vector and tensor GEE. We show that tensor GEE estimator inherits the key advantage of the classical GEE estimator in that it remains consistent even if the working correlation structure is misspecified. On the other hand, we note that, although one can generalize the classical GEE asymptotics by directly vectorizing the tensor, it would have to require a more stringent set of conditions. By contrast, we could achieve the robustness in consistency for our tensor GEE based on a weaker set of conditions, and we achieve this by imposing and exploiting the special structure of the coefficient tensor.

\subsection{Asymptotics for Fixed Dimension}
\label{sec:asy-fixed}

We begin with the list of regularity conditions for the asymptotics of tensor GEE with a fixed number of parameters.  

\begin{enumerate}

\item[(A1)] The elements of $\Xbf_{ij}$, $i=1, \dots, n$, $j=1, \dots, m$, are uniformly bounded by a finite constant. 

\item[(A2)] The true value $\Bbf_0$  of the unknown parameter lies in the interior of a compact parameter space $\mathcal{B}$ and follows a rank-$R$ CP structure defined in (\ref{eqn:R-CP-decomp}). 

\item[(A3)] Letting $I(\Bbf)=n^{-1}\sum_{i=1}^n[\Jbf_1 \ldots \Jbf_D]\trans \vect(\Xbf_i) \vect(\Xbf_i)\trans [\Jbf_1 \ldots \Jbf_D]$. It is assumed that there exist two positive constants $c_1<c_2$ such that 
\begin{equation*}
c_1 \le \lambda_{\min}(I(\Bbf)) \le \lambda_{\max}(I(\Bbf)) \le c_2,
\end{equation*} 
over the set $\{\Bbf: ||\betabf_{\Bbf}-\betabf_{\Bbf_0}||\le \triangle n^{-1/2}\}$ for some constant $\triangle>0$, where $\lambda_{\min}$ and $\lambda_{\max}$ are smallest and largest eigenvalue, respectively. It is also assumed that on the same set $I(\Bbf)$ has a constant rank. 

\item[(A4)] The true intra-subject correlation matrix $\Rbf_0$ has bounded eigenvalues from zero and infinity. The estimated working correlation matrix satisfies $\| \widehat{\Rbf}^{-1} - \tilde{\Rbf}^{-1} \|_\textbf{F} = O_p(n^{-1/2})$, where $\|\cdot\|_\textbf{F}$ is the Frobenius norm,  $\tilde{\Rbf}$ is some positive definite matrix with bounded eigenvalues from zero and infinity, and $\tilde{\Rbf} = \Rbf_0$ is \emph{not} required.

\item[(A5)] For some constant $\delta>0$ and $M_1>0$, $E(\| \Abf^{-1/2}_i(\Bbf_0)(\Ybf_i - \mubf_i(\Bbf_0)) \|)^{2+\delta} \le M_1$ for all $1 \le i \le n$, where $\Abf^{-1/2}_i(\Bbf_0)$ is the covariance matrix of $\Ybf_i $.

\item[(A6)] $\sigma^{-1}_{ij}(\Bbf_0)(Y_{ij}-\mu_{ij}(\Bbf_0))$ has sub-Gaussian tails for all $i=1, \dots, n$, $j=1, \dots, m$. 

\item[(A7)] The elements of $\partial \theta_{ij}(\betabf_{\Bbf_0})/\partial \betabf_{\Bbf_0}$, $i=1, \dots, n$, $j=1, \dots, m$, are uniformly bounded by a finite constant.

\item[(A8)] Denote $\mu^{(k)}(\theta_{ij})$ the $k$-th derivative of $\mu(\theta_{ij})$, where $\theta_{ij}$ is the linear systematic part evaluated at the GEE solution $\widehat \Bbf$. It is assumed that $\mu^{(1)}(\theta_{ij})$ are uniformly bounded away from zero and infinity, and $\mu^{(k)}(\theta_{ij})$ are uniformly bounded by a finite constant, over the set $\{\Bbf: ||\betabf_{\Bbf}-\betabf_{\Bbf_0}||\le \triangle n^{-1/2}\}$, for some constant $\triangle>0$, $i=1, \dots, n$, $j=1, \dots, m$, and $k=2, 3$.

\item[(A9)] Denote $ \Hbf(\Bbf,\Xbf_{ij})=\partial [\Jbf_1 \cdots \Jbf_D]\trans \text{vec}(\Xbf_{ij})/\partial \text{vec}\trans(\Bbf)$. $\Hbf(\Bbf,\Xbf_{ij})$ is the Hessian of the linear systematic part $\theta_{ij}$ under tensor structure. There exist two positive constants $c_3<c_4$ such that  \begin{equation*}
c_3 \le \lambda_{\min}(\Hbf(\Bbf,\Xbf_{ij})) \le \lambda_{\max}(\Hbf(\Bbf,\Xbf_{ij})) \le c_4,
\end{equation*} 
over the set $\{\Bbf: ||\betabf_{\Bbf}-\betabf_{\Bbf_0}||\le \triangle n^{-1/2}\}$ for some constant $\triangle>0$, $i=1, \dots, n$ and $j=1, \dots, m$.

\end{enumerate}

\noindent
A few remarks are in order. Conditions (A2) and (A3) are required for model identifiability of tensor GEE \citep{ZhouLiZhu2013}. We observe that, the matrix $I(\Bbf)$ in (A3) is an $R\sum_{d=1}^Dp_d \times R\sum_{d=1}^Dp_d$ matrix, and thus (A3) is much weaker than the nonsingularity condition on the design matrix if one were to directly vectorize the tensor covariate. Condition (A4) is commonly imposed in the GEE literature. It only requires a consistent estimator $\widehat{\Rbf}$ of some $\tilde \Rbf$, in the sense $\| \widehat{\Rbf}^{-1} - \tilde{\Rbf}^{-1} \|_\textbf{F} = O_p(n^{-1/2})$. $\tilde \Rbf$ needs to be well behaved in that it is positive definite with bounded eigenvalues from zero and infinity, but $\tilde \Rbf$ does \emph{not} have to be the true intra-subject correlation $\Rbf$. This condition essentially leads to the robust feature in Theorem~\ref{thm:consistency} that the tensor GEE estimate is consistent even if the working correlation structure is misspecified. Conditions (A5) and (A6) regulate the tail behavior of the residuals so that the noise cannot accumulate too fast, and we can employ the Lindeberg-Feller central limit theorem to control the asymptotic behavior of the residuals. Condition (A7) states the gradients of the systematic part evaluated at the truth are well-defined. Condition (A8) concerns the canonical link and generally holds for common exponential families, for example, the binomial distribution with $\mu(\theta_{ij})=\exp{\theta_{ij}}/(1+\exp{\theta_{ij}})$, and the Poisson distribution with $\mu(\theta_{ij})=\exp{\theta_{ij}}$. Condition (A9) ensures that the Hessian matrix of the linear systematic part, which is highly sparse, is well-behaved in a neighborhood of the true value. 

Before we turn to the asymptotics of the tensor GEE estimator, we address two components involved in the estimating equations: the initial estimator and the correlation estimator. Recall the tensor GEE estimator $\widehat \Bbf$ is obtained by solving the equations in (\ref{eqn:tgee-corr}). After dropping the covariate vector $\Zbf$, the tensor estimating equations become 
\begin{equation}  \label{eqn:tgee-corr-noz}
\sum_{i=1}^n[\Jbf_1 \ldots \Jbf_D]\trans \vect(\Xbf_i) \Abf^{1/2}_i(\Bbf) \widehat{\Rbf}^{-1} \Abf^{-1/2}_i(\Bbf) \bigl\{ \Ybf_i - \mubf_i(\Bbf) \bigr\} = \zerobf,
\end{equation}
where $\widehat{\Rbf}$ is any estimator of the intra-subject correlation matrix satisfying the condition (A4). We still denote the left hand side by $\sbf(\Bbf)$. Note that (\ref{eqn:tgee-corr-noz}) involves the unknown correlation $\Rbf$, and its estimate $\widehat{\Rbf}$ is often obtained via residual-based moment method, which in turn requires an initial estimator of $\Bbf$. Next, we examine some frequently used estimators of $\widehat{\Bbf}$ and $\widehat{\Rbf}$. 

A customary initial estimator $\widehat{\Bbf}$ in the GEE literature is the one that assumes an independent working correlation. That is, one completely ignores possible intra-subject correlation, and the corresponding tensor GEE becomes 
\begin{equation*} 
\sum_{i=1}^n[\Jbf_1 \ldots \Jbf_D]\trans \vect(\Xbf_i) \bigl\{ \Ybf_i - \mubf_i(\Bbf) \bigr\} = \zerobf.
\end{equation*}
Denoting the equations as $\sbf_{init}(\Bbf) = \zerobf$, and the solution as $\widehat \Bbf_{init}$, the next Lemma shows that it is a consistent estimator of the true $\Bbf_0$. 

\begin{lemma} \label{thm:init-est}
Under conditions (A1)-(A3) and (A5)-(A9), there exists a root $\widehat \Bbf_{init}$ of the equations $\sbf_{init}(\Bbf) = \zerobf$ satisfing that 
\begin{eqnarray*}
\| \betabf_{\widehat{\Bbf}_{init}} - \betabf_{\Bbf_0} \| = O_p(n^{-1/2}). 
\end{eqnarray*}
\end{lemma}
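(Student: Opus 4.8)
The plan is to prove existence of a $\sqrt{n}$-consistent root by the classical ``estimating-function-points-inward'' argument of \citet{Xie2003, Balan2005, Wang2011}, adapted to the multilinear CP parameterization. Write $\betabf = \betabf_{\Bbf}$, $\betabf_0 = \betabf_{\Bbf_0}$, and $\Gbf_i(\betabf) = [\Jbf_1 \ldots \Jbf_D]\trans \vect(\Xbf_i)$, so that $\sbf_{init}(\Bbf) = \sum_{i=1}^n \Gbf_i(\betabf)\{\Ybf_i - \mubf_i(\betabf)\}$ and the $j$-th column of $\Gbf_i$ is $\partial \theta_{ij}/\partial \betabf$. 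I would work on the sphere $\{\betabf : \|\betabf - \betabf_0\| = \triangle n^{-1/2}\}$ and show that, for $\triangle$ large enough, $(\betabf - \betabf_0)\trans \sbf_{init}(\Bbf) < 0$ holds uniformly on that sphere with probability tending to one. A Brouwer-type topological lemma then forces a root $\widehat{\Bbf}_{init}$ into the interior of the ball, yielding $\|\betabf_{\widehat{\Bbf}_{init}} - \betabf_{\Bbf_0}\| = O_p(n^{-1/2})$.

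The first ingredient is the order of $\sbf_{init}$ at the truth. Because $\Bbf_0$ admits the rank-$R$ structure by (A2), the mean model is correctly specified, so $\E\{\Ybf_i - \mubf_i(\betabf_0)\} = \zerobf$ and hence $\E\{\sbf_{init}(\Bbf_0)\} = \zerobf$. Conditions (A1) and (A7) bound the entries of $\Gbf_i(\betabf_0)$, while (A5) and (A8) bound the standardized residuals and the working variances $\mu^{(1)}(\theta_{ij})$; together these give $\V\{\sbf_{init}(\Bbf_0)\} = \sum_i \Gbf_i \V(\Ybf_i) \Gbf_i\trans = O(n)$, so that $\|\sbf_{init}(\Bbf_0)\| = O_p(\sqrt{n})$, with (A6) supplying the tail control for the Lindeberg--Feller step. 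Consequently $|(\betabf - \betabf_0)\trans \sbf_{init}(\Bbf_0)| \le \|\betabf - \betabf_0\|\,\|\sbf_{init}(\Bbf_0)\| = O_p(\triangle)$ on the sphere. Note that (A4) is not invoked, since the independence working correlation reduces the weight matrix to the identity.

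The second ingredient is the sensitivity of $\sbf_{init}$. Expanding $\sbf_{init}(\Bbf) - \sbf_{init}(\Bbf_0)$ by the mean-value theorem along the segment joining $\betabf_0$ to $\betabf$, the Jacobian $\partial \sbf_{init}/\partial \betabf\trans$ splits into a dominant term $-\sum_i \Gbf_i \Abf_i \Gbf_i\trans$, obtained by differentiating $\mubf_i$ with $\Abf_i = \diag\{\mu^{(1)}(\theta_{ij})\}$, and a curvature term built from the Hessians $\Hbf(\Bbf,\Xbf_{ij})$ weighted by the residuals, arising from the dependence of $\Gbf_i$ on $\betabf$. Under (A8) the entries of $\Abf_i$ are bounded away from $0$ and $\infty$, so by (A3) the dominant term is negative definite with eigenvalues of order $n$, giving $(\betabf - \betabf_0)\trans[-\sum_i \Gbf_i \Abf_i \Gbf_i\trans](\betabf - \betabf_0) \le -c\, n \|\betabf - \betabf_0\|^2 = -c\triangle^2$ on the sphere. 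By (A9) the Hessians are uniformly bounded over the $n^{-1/2}$-neighborhood, and since the residuals accumulate only to $O_p(\sqrt{n})$, the curvature contribution contracted twice against $(\betabf - \betabf_0)$ is $O_p(\sqrt{n}\cdot n^{-1}) = o_p(1)$ and thus negligible. Combining the two ingredients gives $(\betabf - \betabf_0)\trans \sbf_{init}(\Bbf) \le O_p(\triangle) - c\triangle^2 + o_p(1)$ on the sphere, which is strictly negative for all large $\triangle$ with probability tending to one, completing the inward-pointing argument.

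The main obstacle, and the feature separating this from the vector-GEE consistency proofs, is that $\theta_{ij}$ is \emph{multilinear} rather than linear in the CP parameter $\betabf$: the ``design'' $\Gbf_i(\betabf)$ itself depends on $\betabf$, and the systematic part carries a nonzero Hessian. The crux is therefore to show that this curvature term is asymptotically dominated by the sensitivity matrix $I(\Bbf)$, which is precisely what (A9) (bounded Hessian) and (A3) (well-conditioned $I(\Bbf)$) are designed to secure. A secondary delicacy is that the CP representation is not unique, so (A2)--(A3) must be read as fixing the local parameterization; the eigenvalue lower bound in (A3) is what excludes flat directions and makes the quadratic term genuinely of order $n\|\betabf - \betabf_0\|^2$.
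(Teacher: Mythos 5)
Your proposal is correct and takes essentially the same route as the paper: the paper proves this lemma by the identical Ortega--Rheinboldt inward-pointing argument it uses for Theorem 3 (simply dropping the working-correlation terms, which vanish under the independence structure), with the same mean-value decomposition into the score at the truth of order $O_p(\triangle)$, the dominant negative-definite quadratic term of order $-c\triangle^2$ controlled by (A3) and (A8), and the residual-weighted Hessian curvature term rendered negligible by (A9). Your side observations --- that (A4) is never invoked and that the multilinearity of the CP parameterization is precisely what makes (A3) and (A9) the crux --- also match the paper's reasoning.
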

\noindent
Here $\betabf_\Bbf = \vect(\Bbf_1, \ldots, \Bbf_D)$, and is constructed based on the CP decomposition of a given tensor $\Bbf = \llbracket \Bbf_1,\ldots,\Bbf_D \rrbracket$, as defined before. 

Given a consistent initial estimator of $\Bbf_0$, there exist multiple choices for the working correlation structure, e.g., autocorrelation, compound symmetry, and the nonparametric structure \citep{Balan2005}. We will investigate those choices in our simulations and real data analysis. 

Next we establish the consistency and asymptotic normality of the tensor GEE estimator from (\ref{eqn:tgee-corr-noz}).

\begin{theorem} \label{thm:consistency}
Under conditions (A1)-(A9), there exists a root $\widehat \Bbf$ of the equations $\sbf(\Bbf) = \zerobf$ satisfing that 
\vspace{-0.1in}
\begin{eqnarray*}
\| \betabf_{\widehat{\Bbf}} - \betabf_{\Bbf_0} \| = O_p(n^{-1/2}). 
\end{eqnarray*}
\end{theorem}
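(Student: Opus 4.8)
The plan is to prove existence of a root in a shrinking ball by a Brouwer-type argument, adapting the strategy of \citet{Xie2003} and \citet{Wang2011} to the multilinear parametrization induced by the CP decomposition. Writing any competitor as $\betabf_\Bbf = \betabf_{\Bbf_0} + n^{-1/2}\vbf$, I would study the scalar field $\vbf\trans \sbf(\betabf_{\Bbf_0} + n^{-1/2}\vbf)$ on the sphere $\{\vbf : \|\vbf\| = \triangle\}$. The goal is to show that for every sufficiently large constant $\triangle$, this inner product is negative uniformly over the sphere with probability tending to one. Since $\sbf$ is a continuous vector field that then points strictly inward everywhere on the boundary of the ball $\{\vbf : \|\vbf\|\le\triangle\}$, a standard consequence of Brouwer's fixed point theorem forces a zero of $\sbf$ in the interior; that zero is the claimed root $\widehat\Bbf$, and it automatically obeys $\|\betabf_{\widehat\Bbf} - \betabf_{\Bbf_0}\| \le \triangle n^{-1/2}$, which is the asserted $O_p(n^{-1/2})$ rate.

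The core computation is a second-order Taylor expansion of $\sbf$ about $\betabf_{\Bbf_0}$, which splits the normalized field $n^{-1/2}\vbf\trans\sbf(\betabf_{\Bbf_0}+n^{-1/2}\vbf)$ into three pieces. For the zeroth-order piece $n^{-1/2}\vbf\trans\sbf(\betabf_{\Bbf_0})$, the observation granting robustness to misspecification is that $\Ybf_i - \mubf_i(\Bbf_0)$ is mean zero regardless of the working correlation used, so $\sbf(\betabf_{\Bbf_0})$ is a sum of $n$ independent mean-zero vectors; conditions (A5)--(A6) then yield $\|\sbf(\betabf_{\Bbf_0})\| = O_p(n^{1/2})$ through the Lindeberg-Feller central limit theorem, making this piece $O_p(\triangle)$. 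Along the way I would replace the random $\widehat\Rbf^{-1}$ by the deterministic limit $\tilde{\Rbf}^{-1}$ of condition (A4); the replacement error is $O_p(n^{-1/2})$ times a bounded factor and is thus negligible, and crucially $\tilde{\Rbf}$ need not equal the true correlation. For the first-order piece, differentiating $\sbf$ at $\betabf_{\Bbf_0}$ produces two contributions: differentiating the residual $\Ybf_i-\mubf_i$ yields a negative weighted Gram matrix $-\sum_i\Mbf_i$, where each $\Mbf_i$ is built from the Jacobians $\Jbf_d$, the weight $\Abf_i^{1/2}\tilde{\Rbf}^{-1}\Abf_i^{-1/2}$ and $\mu^{(1)}(\theta_{ij})$, whose smallest eigenvalue is of exact order $n$ by (A3), (A4) and (A8); differentiating the Jacobians instead contracts the Hessian $\Hbf$ of (A9) against the mean-zero residuals and is therefore only $O_p(n^{1/2})$, hence lower order. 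This piece is thus bounded above by $-c\,\triangle^2$ for some $c>0$ with probability tending to one.

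The decisive and genuinely tensor-specific piece is the Taylor remainder. Because $\theta_{ij}$ is multilinear rather than linear in $\betabf_\Bbf$, the second derivative of $\sbf$ does not vanish as it would in vector GEE; it couples the higher link derivatives $\mu^{(2)},\mu^{(3)}$ governed by (A8) with the Hessian $\Hbf(\Bbf,\Xbf_{ij})$ of the systematic part governed by (A9). I would bound the remainder's contribution to the normalized field by a constant times $n\,\|\betabf_\Bbf-\betabf_{\Bbf_0}\|^3 = \triangle^3 n^{-1/2}$ uniformly on the ball, which is $o(\triangle^2)$ for fixed $\triangle$ as $n\to\infty$ and hence cannot overturn the negative quadratic drift. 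Assembling the pieces gives $n^{-1/2}\vbf\trans\sbf(\betabf_{\Bbf_0}+n^{-1/2}\vbf) \le O_p(\triangle) - c\,\triangle^2 + o_p(\triangle^2) < 0$ once $\triangle$ is large, which closes the Brouwer argument. I expect the main obstacle to be exactly this uniform remainder control: tracking how the Kronecker and Khatri-Rao factors inside $\Jbf_d$ propagate through the sparse Hessian $\Hbf$ is what forces condition (A9), and it is precisely this structural bound that lets tensor GEE succeed under (A3) alone rather than the far stronger full-rank design condition that naive vectorization would demand.
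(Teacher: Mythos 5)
Your proposal is correct and follows essentially the same route as the paper: the paper also verifies the Ortega--Rheinboldt/Brouwer-type sufficient condition that $(\betabf_\Bbf-\betabf_{\Bbf_0})\trans\sbf(\Bbf)<0$ uniformly on the sphere of radius $\triangle n^{-1/2}$, decomposing the field into the score at the truth (shown $O_p(n^{1/2})$ after swapping $\widehat{\Rbf}^{-1}$ for $\tilde{\Rbf}^{-1}$ via (A4)), a dominant negative Gram term $\tilde{\Dbf}_{n1}$ with eigenvalues of order $n$ by (A3), (A4), (A8), and lower-order residual-contraction terms controlled through (A9). The only cosmetic difference is that you organize the expansion as an explicit second-order Taylor remainder, whereas the paper uses a mean-value form $\Dbf_n(\betabf^*_n)$ together with approximation lemmas (its Lemmas 2--4, which also track the terms from differentiating the weights $\Abf_i^{\pm 1/2}(\Bbf)$, terms your ``mean-zero residual, hence lower order'' argument covers implicitly).
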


\noindent
The key message of Theorem \ref{thm:consistency}, as implied by condition (A4), is that the consistency of the tensor coefficient estimator $\widehat \Bbf$ does \emph{not} require the estimated working correlation $\widehat{\Rbf}$ being a consistent estimator of the true correlation $\Rbf$. This protects us from potential misspecification of the intra-subject correlation structure. Such a robustness feature is well known for GEE estimator with vector-valued covariates. Theorem \ref{thm:consistency} confirms and extends this result to the tensor GEE case with image covariates. We also remark that, although the asymptotics of the classical GEE can in principle be generalized to the tensor data by directly vectorizing the coefficient array, the ultrahigh dimensionality of the parameters would have made the regularity conditions such as (A3) unrealistic. By contrast, Theorem \ref{thm:consistency} ensures that one could still enjoy the consistency and robustness properties, by taking into account the structural information of the tensor coefficient under the GEE framework.

Under condition (A4), we define 
\begin{eqnarray*}
\tilde\Mbf_n(\Bbf) & = & \sum_{i=1}^n [\Jbf_1 \ldots \Jbf_D]\trans \vect(\Xbf_i)\Abf^{1/2}_i(\Bbf) \tilde{\Rbf}^{-1} \Rbf_0 \tilde{\Rbf}^{-1} \Abf^{1/2}_i(\Bbf) \vect\trans(\Xbf_i) [\Jbf_1 \ldots \Jbf_D], \\
\tilde\Dbf_{n1}(\Bbf) & = & \sum_{i=1}^n [\Jbf_1 \ldots \Jbf_D]\trans \vect(\Xbf_i)\Abf^{1/2}_i(\Bbf) \tilde{\Rbf}^{-1} \Abf^{1/2}_i(\Bbf) \vect\trans(\Xbf_i) [\Jbf_1 \ldots \Jbf_D]. 
\end{eqnarray*} 
As we will show in the appendix, $\tilde\Mbf_n(\Bbf)$ approximates the covariance matrix of $\sbf(\Bbf)$ in (\ref{eqn:tgee-corr-noz}), while $\tilde\Dbf_{n1}(\Bbf)$ approximates the leading term of the negative gradient of $\sbf(\Bbf)$ with respect to $\betabf_\Bbf$. Then the next theorem gives the asymptotic normality of the tensor GEE estimator. 
\begin{theorem} \label{thm:normality}
Under conditions (A1)-(A9), for any vector $\bbf \in \real{R \sum_{d=1}^D p_d}$ such that $\| \bbf \|=1$, we have
\begin{eqnarray*}
\bbf\trans \tilde{\Mbf_n}^{-1/2}(\Bbf_0) \tilde{\Dbf}_{n1}(\Bbf_0) \bigl( \betabf_{\widehat{\Bbf}} - \betabf_{\Bbf_0} \bigr) \to \mathrm{Normal} (0, 1) \textrm{ in distribution.}
\end{eqnarray*}
\end{theorem}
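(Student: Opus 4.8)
The plan is to follow the standard estimating-equation route: linearize the score $\sbf(\Bbf)$ about the true value $\Bbf_0$, invert the leading gradient, and apply a central limit theorem to the resulting linear term. Let $\widehat\Bbf$ denote the consistent root supplied by Theorem~\ref{thm:consistency}, so that $\sbf(\widehat\Bbf)=\zerobf$ and $\|\betabf_{\widehat\Bbf}-\betabf_{\Bbf_0}\|=O_p(n^{-1/2})$. A first-order Taylor expansion of $\sbf$ in $\betabf_\Bbf$ about $\betabf_{\Bbf_0}$ gives $\zerobf=\sbf(\Bbf_0)+\left[\partial\sbf/\partial\betabf_\Bbf\big|_{\Bbf_0}\right](\betabf_{\widehat\Bbf}-\betabf_{\Bbf_0})+\boldsymbol{\rho}_n$, with $\boldsymbol{\rho}_n$ the second-order remainder. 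Writing the negative gradient as $-\partial\sbf/\partial\betabf_\Bbf\big|_{\Bbf_0}=\tilde\Dbf_{n1}(\Bbf_0)+\boldsymbol{\Delta}_n$, I rearrange to $\tilde\Dbf_{n1}(\Bbf_0)(\betabf_{\widehat\Bbf}-\betabf_{\Bbf_0})=\sbf(\Bbf_0)+\boldsymbol{\rho}_n-\boldsymbol{\Delta}_n(\betabf_{\widehat\Bbf}-\betabf_{\Bbf_0})$. Premultiplying by $\bbf\trans\tilde\Mbf_n^{-1/2}(\Bbf_0)$, the target reduces to $\bbf\trans\tilde\Mbf_n^{-1/2}(\Bbf_0)\sbf(\Bbf_0)$ plus terms I must show are $o_p(1)$.

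Next I justify the two approximations flagged in the excerpt and the negligibility of the error terms. Differentiating $\sbf$ produces three groups: the dominant piece from differentiating $\mubf_i$, which (using the canonical-link identity $\Abf_i=\diag\mu^{(1)}$) yields $\tilde\Dbf_{n1}$ once $\widehat\Rbf$ is replaced by $\tilde\Rbf$; a piece from differentiating the weight $\Abf_i^{1/2}$; and a piece from differentiating the Jacobian $[\Jbf_1\ldots\Jbf_D]$, i.e.\ the Hessian $\Hbf(\Bbf,\Xbf_{ij})$ of the systematic part. The latter two groups multiply the mean-zero residual $\Ybf_i-\mubf_i(\Bbf_0)$, so each is a centered sum of order $O_p(n^{1/2})$ in norm by (A1), (A5), (A7)--(A9); hence $\boldsymbol{\Delta}_n(\betabf_{\widehat\Bbf}-\betabf_{\Bbf_0})=O_p(1)$, which becomes $o_p(1)$ after the $\tilde\Mbf_n^{-1/2}$ scaling of order $n^{-1/2}$. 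The same scaling kills $\boldsymbol{\rho}_n$, whose second derivatives are $O_p(n)$ while $\|\betabf_{\widehat\Bbf}-\betabf_{\Bbf_0}\|^2=O_p(n^{-1})$, and the relative error of the $\widehat\Rbf\to\tilde\Rbf$ swap is only $O_p(n^{-1/2})$ by (A4). For the covariance, $\Cov(\Ybf_i)=\Abf_i^{1/2}\Rbf_0\Abf_i^{1/2}$ shows the variance of the $i$-th summand of $\sbf(\Bbf_0)$ equals the $i$-th summand of $\tilde\Mbf_n(\Bbf_0)$ up to the same discrepancy, while the eigenvalue bounds in (A3), (A4), (A8) keep $\tilde\Dbf_{n1}$ and $\tilde\Mbf_n$ nondegenerate at rate $n$.

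Third, I establish asymptotic normality of the normalized score by the Lindeberg--Feller central limit theorem. Set $W_{ni}=\bbf\trans\tilde\Mbf_n^{-1/2}(\Bbf_0)[\Jbf_1\ldots\Jbf_D]\trans\vect(\Xbf_i)\Abf_i^{1/2}\tilde\Rbf^{-1}\Abf_i^{-1/2}(\Ybf_i-\mubf_i(\Bbf_0))$, independent across $i$ with mean zero. By construction of $\tilde\Mbf_n$ their variances sum to $\bbf\trans\bbf=1$ up to the negligible correlation-estimation discrepancy. The Lindeberg condition $\sum_i E[W_{ni}^2\,\mathbf{1}\{|W_{ni}|>\epsilon\}]\to0$ follows from the $(2+\delta)$-moment bound (A5) and the sub-Gaussian tails (A6) of the standardized residuals together with the eigenvalue bounds from (A3)--(A4), which force $\max_i|W_{ni}|\to_p 0$. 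Hence $\sum_i W_{ni}=\bbf\trans\tilde\Mbf_n^{-1/2}(\Bbf_0)\sbf(\Bbf_0)+o_p(1)\to\mathrm{Normal}(0,1)$, and combining with the linearization of the first step delivers the claim.

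The step I expect to be the main obstacle is the remainder control, specifically the terms generated by differentiating the Jacobian $[\Jbf_1\ldots\Jbf_D]$. These curvature terms are absent in the classical vector GEE and arise precisely because the Jacobian depends on $\Bbf$ through the Khatri--Rao factors of the CP decomposition. Showing they are asymptotically negligible is exactly where (A9) and the low-rank structure enter: one must exploit the sparsity and bounded eigenvalues of the Hessian $\Hbf(\Bbf,\Xbf_{ij})$ rather than crude bounds obtained by vectorizing the tensor, which is where the tensor argument genuinely departs from the standard proof and where the weaker regularity conditions promised in the introduction become essential.
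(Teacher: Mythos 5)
Your proposal is correct and follows essentially the same route as the paper's own proof: the paper likewise linearizes the estimating equations about $\betabf_{\Bbf_0}$ (via the mean-value form $\sbf_n(\boldsymbol\beta_0)=\Dbf_n(\boldsymbol\beta^*_n)(\widehat{\boldsymbol\beta}_n-\boldsymbol\beta_0)$), replaces the negative gradient by $\tilde{\Dbf}_{n1}(\Bbf_0)$ by showing the $\widehat{\Rbf}\to\tilde{\Rbf}$ swap, the weight-derivative terms, and the Jacobian-derivative (Hessian $\Hbf(\Bbf,\Xbf_{ij})$) terms are all $o_p(1)$ after the $\tilde{\Mbf}_n^{-1/2}(\Bbf_0)$ scaling (its Lemmas 2--4), and then applies a Lyapunov/Lindeberg--Feller CLT to $\bbf\trans\tilde{\Mbf}_n^{-1/2}(\Bbf_0)\tilde{\sbf}_n(\Bbf_0)$ exactly as you do. The only cosmetic difference is that you expand with the gradient frozen at $\Bbf_0$ plus a second-order remainder $\boldsymbol{\rho}_n$, while the paper keeps the gradient at the intermediate point and separately bounds $\tilde{\Dbf}_{n1}(\boldsymbol\beta^*_n)-\tilde{\Dbf}_{n1}(\Bbf_0)$; the underlying estimates are the same.
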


By Theorem \ref{thm:normality} and Cram\'er-Wold theorem, one can derive the sandwich covariance estimator of $\textrm{Var}(\betabf_{\widehat \Bbf})$, and carry out the subsequent Wald inference. Specifically, it is easy to see that the variance of the GEE estimator can be approximated by the asymptotic variance $\tilde\Dbf_{n1}^{-1}(\Bbf_0) \tilde\Mbf_n(\Bbf_0) \tilde\Dbf_{n1}^{-1}(\Bbf_0)$. Since it involves the unknown terms $\Bbf_0, \Rbf_0$ and $\tilde\Rbf$, we plug in, respectively, $\widehat{\Bbf}$, $n^{-1} \sum_{i=1}^n \Abf^{-1/2}_i(\widehat{\Bbf})\{ \Ybf_i-\mubf_i(\widehat{\Bbf}) \} \{ \Ybf_i-\mubf_i(\widehat{\Bbf}) \}\trans\Abf^{-1/2}_i(\widehat{\Bbf})$, and $\widehat{\Rbf}$, which leads to the sandwich estimator, 
\begin{eqnarray*} 
\widehat{\text{Var}}(\betabf_{\widehat{\Bbf}}) = \tilde\Dbf_{n1}^{-1}(\widehat{\Bbf}) \tilde\Mbf_n(\widehat{\Bbf}) \tilde\Dbf_{n1}^{-1}(\widehat{\Bbf}). 
\end{eqnarray*}
This sandwich formula in turn can be used to construct asymptotic confidence interval or asymptotic hypothesis testing through the usual Wald inference.

\subsection{Asymptotics for Diverging Dimension}
\label{sec:asy-diverge}

We next study the asymptotics when the number of parameters diverges. We assume that $p_d \sim p_n$ for $d=1, \dots, D$, where $a_n \sim b_n$ means $a_n=O(b_n)$ and $b_n=O(a_n)$. We also assume that the rank $R$ is fixed in the tensor GEE. Next we list the required regularity conditions. Since the conditions (A1), (A2), (A5)--(A7) are the same as in Section~\ref{sec:asy-fixed}, we only list the conditions that are different, while we relabel those same conditions as (A1$^*$), (A2$^*$), (A5$^*$)--(A7$^*$), respectively. 

\begin{enumerate}

\item[(A3$^*$)] There exist two positive constant $c_1<c_2$ such that 
\begin{equation*}
c_1 \le \lambda_{\min}(I(\Bbf)) \le \lambda_{\max}(I(\Bbf)) \le c_2,
\end{equation*} 
over the set $\{\Bbf: ||\betabf_{\Bbf}-\betabf_{\Bbf_0}||\le \triangle\sqrt{p_n/n}\}$ for some constant $\triangle>0$. It is also assumed that $I(\Bbf)$ has a constant rank on the same set. 

\item[(A4$^*$)] The true intra-subject correlation matrix $\Rbf_0$ has bounded eigenvalues from zero and infinity. The estimated working correlation matrix satisfies $\| \widehat{\Rbf}^{-1} - \tilde{\Rbf}^{-1} \|_\textbf{F} = O_p(\sqrt{p_n/n})$, where $\|\cdot\|_\textbf{F}$ is the Frobenius norm,  $\tilde{\Rbf}$ is some positive definite matrix with bounded eigenvalues from zero and infinity, and $\tilde{\Rbf} = \Rbf_0$ is not required.

\item[(A8$^*$)] It is assumed that $\mu^{(1)}(\theta_{ij})$ are uniformly bounded away from zero and infinity, and $\mu^{(k)}(\theta_{ij})$ are uniformly bounded by a finite constant, over the set $\{\Bbf: ||\betabf_{\Bbf}-\betabf_{\Bbf_0}||\le \triangle\sqrt{p_n/n}\}$, for some constant $\triangle>0$, $i=1, \dots, n$, $j=1, \dots, m$, and $k=2, 3$.

\item[(A9$^*$)] There exist two positive constants $c_3<c_4$ such that  \begin{equation*}
c_3 \le \lambda_{\min}(\Hbf(\Bbf,\Xbf_{ij})) \le \lambda_{\max}(\Hbf(\Bbf,\Xbf_{ij})) \le c_4,
\end{equation*} 
over the set $\{\Bbf: ||\betabf_{\Bbf}-\betabf_{\Bbf_0}||\le \triangle\sqrt{p_n/n}\}$ for some constant $\triangle>0$, $i=1, \dots, n$ and $j=1, \dots, m$.
\end{enumerate}

\noindent
Comparing the two sets of regularity conditions for the fixed and diverging number of parameters, the main difference is that the conditions are imposed on the set $\{\Bbf: ||\betabf_{\Bbf}-\betabf_{\Bbf_0}||\le \triangle\sqrt{p_n/n}\}$ when the number of parameters diverges. This is due to the slower convergence rate of the tensor GEE estimator with a diverging $p_n$. In addition, we note that $I(\Bbf)$ and $\Hbf(\Bbf,\Xbf_{ij})$ are no longer matrices with fixed dimensions when $p_n$ diverges. Correspondingly, we impose conditions (A3*) and (A9*) on the bounded eigenvalues, which are similar to the sparse Riesz condition for vector covariates. The latter condition has been frequently employed in the current literature of inference with diverging dimensions \citep{zhang2008sparsity, zhang2010nearly}. 

Next we present the asymptotics for the tensor GEE estimator with a diverging $p_n$. 

\begin{theorem} \label{thm:consistency_diverge}
Under conditions (A1*)-(A9*), and $p_n=o(n^{1/2})$, there exists a root $\widehat \Bbf$ of the equations $\sbf(\Bbf) = \zerobf$ satisfying that 
\begin{eqnarray*}
\| \betabf_{\widehat{\Bbf}} - \betabf_{\Bbf_0} \| = O_p(\sqrt{p_n/n}). 
\end{eqnarray*}
\end{theorem}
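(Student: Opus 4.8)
The plan is to establish existence of a consistent root of $\sbf(\Bbf)=\zerobf$ by an inward vector-field (ball) argument, mirroring the proof of Theorem~\ref{thm:consistency} but tracking the explicit dependence on the diverging dimension. Write $\betabf=\betabf_{\Bbf}$, $\betabf_0=\betabf_{\Bbf_0}$, and set the target rate $r_n=\sqrt{p_n/n}$. The key reduction is to show that for every $\epsilon>0$ there is a constant $\triangle>0$ so that, for all large $n$,
\begin{equation*}
P\!\left( \sup_{\|\ubf\|=\triangle} \ubf\trans \sbf(\betabf_0 + r_n\ubf) < 0 \right) \ge 1-\epsilon .
\end{equation*}
Because $\ubf \mapsto \sbf(\betabf_0+r_n\ubf)$ is continuous and, on this event, points strictly inward on the sphere of radius $\triangle$, a standard continuity (Brouwer-type) lemma yields a zero in the open ball $\{\|\ubf\|<\triangle\}$, i.e.\ a root $\widehat\Bbf$ with $\|\betabf_{\widehat\Bbf}-\betabf_{\Bbf_0}\|\le \triangle r_n$, which is the claimed rate.

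To analyze the sphere quantity I would Taylor-expand $\sbf(\betabf_0+r_n\ubf)$ about $\betabf_0$ to second order, giving $\ubf\trans\sbf(\betabf_0+r_n\ubf)$ as the sum of a stochastic linear term $\ubf\trans\sbf(\betabf_0)$, a leading quadratic term $r_n\,\ubf\trans[\partial\sbf(\betabf_0)/\partial\betabf\trans]\ubf$ whose leading part equals $-r_n\,\ubf\trans\tilde\Dbf_{n1}(\Bbf_0)\ubf$, and a second-order remainder. For the linear term, the summands of $\sbf(\Bbf_0)$ are independent across subjects with mean zero, so using (A1$^*$), (A3$^*$), (A4$^*$) and the moment bound (A5$^*$) one gets $E\|\sbf(\Bbf_0)\|^2 \approx \tr\tilde\Mbf_n(\Bbf_0)=O(np_n)$, hence $|\ubf\trans\sbf(\betabf_0)|\le \triangle\, O_p(\sqrt{np_n})$. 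For the quadratic term, (A3$^*$), (A4$^*$) and (A8$^*$) give $\lambda_{\min}(\tilde\Dbf_{n1}(\Bbf_0))\ge c\,n$ for some $c>0$, so this term is $\le -c\triangle^2 r_n n=-c\triangle^2\sqrt{np_n}$ (using $r_n n=\sqrt{np_n}$). Thus the stochastic linear part and the negative-definite quadratic part are both of exact order $\sqrt{np_n}$, the former growing like $\triangle$ and the latter like $\triangle^2$, so for $\triangle$ large the quadratic term dominates.

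The remainder is where the real work lies, and I expect it to be the \emph{main obstacle}. Each coordinate of the second-order term is a weighted sum over the $n$ subjects of the systematic-part Hessian $\Hbf(\Bbf,\Xbf_{ij})$ against $\mu^{(2)},\mu^{(3)}$ and residuals; bounded covariates (A1$^*$), bounded higher derivatives (A8$^*$), and crucially the bounded-eigenvalue \emph{sparse} Hessian condition (A9$^*$) keep each such coordinate of operator norm $O(n)$ rather than growing with $p_n$. A crude $\ell_1/\ell_2$ bound, using $\|\ubf\|_1\le\sqrt{p_n}\,\|\ubf\|_2$, then controls the full remainder by $O_p\!\big(r_n^2\sqrt{p_n}\,n\,\triangle^3\big)=O_p(p_n^{3/2}\triangle^3)$ (since $r_n^2 n=p_n$). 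Dividing by the leading order $\sqrt{np_n}$ gives $O_p(p_n/\sqrt{n})$ for fixed $\triangle$, which is $o_p(1)$ \emph{precisely} under $p_n=o(n^{1/2})$; this is exactly why the stated rate condition appears. In parallel I would replace $\widehat\Rbf$ by the deterministic $\tilde\Rbf$ using (A4$^*$), $\|\widehat\Rbf^{-1}-\tilde\Rbf^{-1}\|_\mathbf{F}=O_p(r_n)$, verifying that this substitution perturbs $\sbf$ and its gradient only at lower order; since $\tilde\Rbf=\Rbf_0$ is never needed, this step also delivers the robustness to correlation misspecification.

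Finally I would assemble the three pieces: first fix $\triangle$ large enough that the coefficient $c\triangle^2$ of the negative quadratic term exceeds the $O_p(1)$ constant multiplying $\triangle$ in the linear term, then let $n\to\infty$ so the remainder is asymptotically negligible relative to $\sqrt{np_n}$. On the resulting high-probability event, $\ubf\trans\sbf(\betabf_0+r_n\ubf)<0$ uniformly over $\|\ubf\|=\triangle$, establishing the displayed sphere condition and hence, via the continuity lemma, the existence of a root with $\|\betabf_{\widehat\Bbf}-\betabf_{\Bbf_0}\|=O_p(\sqrt{p_n/n})$. The only genuinely new difficulty relative to the fixed-$p$ case of Theorem~\ref{thm:consistency} is the uniform control of the second-order remainder as $p_n\to\infty$, which the sparsity of $\Hbf$ in (A9$^*$) and the scaling $p_n=o(n^{1/2})$ are tailored to resolve.
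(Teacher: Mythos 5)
Your proposal is correct and takes essentially the same route as the paper: the inward-pointing sphere condition at radius $\propto \sqrt{p_n/n}$ (the Ortega--Rheinboldt/Brouwer argument), a Taylor expansion yielding a stochastic linear term of order $\triangle\sqrt{np_n}$, a dominant negative quadratic term of order $\triangle^2\sqrt{np_n}$ coming from $\tilde{\Dbf}_{n1}(\Bbf_0)$ via (A3$^*$), (A4$^*$), (A8$^*$), the replacement of $\widehat{\Rbf}$ by $\tilde{\Rbf}$ using (A4$^*$), and remainder terms that vanish relative to the leading order precisely when $p_n=o(n^{1/2})$. The only cosmetic difference is that you track an explicit second-order remainder, whereas the paper uses the mean-value form and controls the gradient's variation over the ball through its Lemmas 2--3; the underlying bounds are the same.
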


\noindent
It is important to note that, if one directly vectorizes the tensor covariate and applies the asymptotics of the classical GEE as in  \citet{Wang2011},  the conditions for the consistency would require $\prod_{d=1}^Dp_d=o(n^{1/2})$, i.e. $p_n=o(n^{1/(2D)})$. This rate can be much more stringent for a tensor covariate. Theorem \ref{thm:consistency_diverge}, instead, states that the consistency still holds with $p_n=o(n^{1/2})$, after imposing and exploiting the low rank tensor structure on the coefficients array. 

The asymptotic normality can also be established for a diverging $p_n$. 
\begin{theorem} \label{thm:normality_diverge}
Under conditions (A1*)-(A9*), and $p_n=o(n^{1/3})$, for any vector $\bbf_n \in \real{R \sum_{d=1}^D p_d}$ such that $\| \bbf_n\|=1$, we have
\vspace{-0.1in}
\begin{eqnarray*}
\bbf_n\trans \tilde{\Mbf}_n^{-1/2}(\Bbf_0) \tilde{\Dbf}_{n1}(\Bbf_0) \bigl( \betabf_{\widehat{\Bbf}} - \betabf_{\Bbf_0} \bigr) \to \mathrm{Normal} (0, 1) \textrm{ in distribution.}
\end{eqnarray*}
\end{theorem}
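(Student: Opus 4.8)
The plan is to follow the same linearization strategy used for the fixed-dimension Theorem~\ref{thm:normality}, but to track the dependence on $p_n$ at every step, since the rate $p_n=o(n^{1/3})$ is dictated precisely by the higher-order remainder. Because $\widehat{\Bbf}$ solves $\sbf(\widehat{\Bbf})=\zerobf$, I would Taylor expand $\sbf$ about $\Bbf_0$ along the segment joining $\betabf_{\widehat{\Bbf}}$ and $\betabf_{\Bbf_0}$, writing
\begin{eqnarray*}
\zerobf = \sbf(\Bbf_0) + \left.\frac{\partial \sbf}{\partial \betabf_{\Bbf}}\right|_{\Bbf_0}\bigl(\betabf_{\widehat{\Bbf}}-\betabf_{\Bbf_0}\bigr) + \bm{r}_n,
\end{eqnarray*}
where $\bm{r}_n$ collects the second-order terms built from the Hessian $\Hbf(\Bbf,\Xbf_{ij})$ and the higher link derivatives $\mu^{(k)}$. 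The negative gradient at $\Bbf_0$ splits into the deterministic leading block $\tilde\Dbf_{n1}(\Bbf_0)$ plus a mean-zero piece driven by the residuals $\Ybf_i-\mubf_i(\Bbf_0)$; rearranging gives $\tilde\Dbf_{n1}(\Bbf_0)(\betabf_{\widehat{\Bbf}}-\betabf_{\Bbf_0}) = \sbf(\Bbf_0) + (\text{gradient error}) + \bm{r}_n$. Premultiplying by $\bbf_n\trans\tilde\Mbf_n^{-1/2}(\Bbf_0)$ reduces the theorem to (i) a central limit theorem for the normalized score and (ii) the statement that every remainder is $o_p(1)$ after this normalization.

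For step (i) I would first replace $\widehat{\Rbf}$ by the deterministic $\tilde{\Rbf}$ inside $\sbf(\Bbf_0)$ using (A4*), so that the leading score becomes $\sbf^*(\Bbf_0)=\sum_{i=1}^n [\Jbf_1\ldots\Jbf_D]\trans\vect(\Xbf_i)\Abf_i^{1/2}(\Bbf_0)\tilde{\Rbf}^{-1}\Abf_i^{-1/2}(\Bbf_0)\{\Ybf_i-\mubf_i(\Bbf_0)\}$, a sum of $n$ independent mean-zero vectors with exact covariance $\tilde\Mbf_n(\Bbf_0)$. Writing $W_{ni}=\bbf_n\trans\tilde\Mbf_n^{-1/2}(\Bbf_0)\sbf_i^*$ for the $i$-th summand $\sbf_i^*$, the scalars $W_{ni}$ are independent, mean zero, and sum to variance one by construction. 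The Lindeberg condition $\sum_i \E[W_{ni}^2\mathbf{1}\{|W_{ni}|>\epsilon\}]\to 0$ would be verified by bounding $\sum_i \E|W_{ni}|^{2+\delta}$ using the moment bound (A5*) and the sub-Gaussian tails (A6*), together with the eigenvalue bounds that make $\|\tilde\Mbf_n^{-1/2}(\Bbf_0)\|$ of order $n^{-1/2}$ and the bounds on $I(\Bbf)$ from (A3*); this yields $\bbf_n\trans\tilde\Mbf_n^{-1/2}(\Bbf_0)\sbf^*(\Bbf_0)\to\mathrm{Normal}(0,1)$.

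Step (ii) is where the diverging dimension bites and is the main obstacle. Using Theorem~\ref{thm:consistency_diverge}, $\|\betabf_{\widehat{\Bbf}}-\betabf_{\Bbf_0}\|=O_p(\sqrt{p_n/n})$, I would bound each remainder coordinatewise. The correlation-replacement term is controlled by (A4*) combined with the concentration of the residual sum; the gradient-error term is the product of a mean-zero, residual-driven matrix with $(\betabf_{\widehat{\Bbf}}-\betabf_{\Bbf_0})$ and is handled by the Hessian bound (A9*) and the link bounds (A8*). The binding term is the second-order remainder $\bm{r}_n$: each of its $R\sum_d p_d\sim p_n$ coordinates is a quadratic form $\tfrac12(\betabf_{\widehat{\Bbf}}-\betabf_{\Bbf_0})\trans\Hbf^{(k)}(\betabf_{\widehat{\Bbf}}-\betabf_{\Bbf_0})$ whose slice $\Hbf^{(k)}$ is a sum over the $n$ subjects of terms with eigenvalues bounded by (A8*)--(A9*), hence of order $n$; thus each coordinate is $O_p(n\cdot p_n/n)=O_p(p_n)$ and $\|\bm{r}_n\|=O_p(p_n^{3/2})$. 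Since $\|\tilde\Mbf_n^{-1/2}(\Bbf_0)\bbf_n\|=O(n^{-1/2})$, the normalized contribution $\bbf_n\trans\tilde\Mbf_n^{-1/2}(\Bbf_0)\bm{r}_n$ is $O_p(p_n^{3/2}n^{-1/2})$, which is $o_p(1)$ exactly when $p_n^3/n\to0$, i.e. $p_n=o(n^{1/3})$. Collecting the $o_p(1)$ remainders and invoking Slutsky's theorem together with the limit of step (i) would complete the argument; this same dimension count also explains why the normality rate $o(n^{1/3})$ is strictly stronger than the consistency rate $o(n^{1/2})$ of Theorem~\ref{thm:consistency_diverge}.
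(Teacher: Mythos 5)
Your proposal is correct and follows essentially the same route as the paper: a Lindeberg--Feller/Lyapunov CLT for the normalized score with $\widehat{\Rbf}$ replaced by $\tilde{\Rbf}$, a linearization of $\sbf_n$ around $\betabf_{\Bbf_0}$ using $\sbf_n(\betabf_{\widehat{\Bbf}})=\zerobf$, and control of the remainders at the critical rate $O_p(p_n^{3/2}n^{-1/2})$, which is exactly where $p_n=o(n^{1/3})$ enters in the paper's bound for its term $I_{n1}$. The only cosmetic difference is that you carry an explicit second-order Taylor remainder built from the Hessians, whereas the paper uses the mean-value form $\sbf_n(\betabf_{\Bbf_0})=\Dbf_n(\betabf^*_n)(\betabf_{\widehat{\Bbf}}-\betabf_{\Bbf_0})$ and absorbs the same second-order effects into the gradient differences controlled by its Lemmas 3 and 4.
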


\noindent
Similarly, for the asymptotic normality to hold, the condition would have become $p_n=o(n^{1/(3D)})$ if one directly vectorizes the tensor covariate. By contrast, the tensor GEE requires $p_n=o(n^{1/3})$.

\section{Simulations}
\label{sec:simulations}

We have carried out extensive simulations to investigate the finite sample performance of our proposed tensor GEE approach. We adopt the following simulation setup. We generated the responses according to the normal model 
\begin{equation*}
\Ybf_i \sim \mathrm{MVN} (\mubf_i, \sigma^2{\bf R}_0), \quad i=1,\dots, n, 
\end{equation*}
where $\Ybf_i=(Y_{i1}, \ldots, Y_{im})\trans$, $\mubf_i=(\mu_{i1}, \ldots, \mu_{im})\trans$, $\sigma^2$ is a scale parameter, and $\Rbf_0$ is the true $m \times m$ intra-subject correlation matrix. We have chosen $\Rbf_0$ to be of an exchangeable (compound symmetric) structure with the off-diagonal coefficient $\rho = 0.8$. The mean function is of the form
\begin{equation*}
\mu_{ij}=\gammabf\trans \Zbf_{ij}+\langle \Bbf, \Xbf_{ij} \rangle, \quad i=1, \ldots, n, j=1, \ldots, m,
\end{equation*}
where $\Zbf_{ij} \in \real{5}$ denotes the covariate vector, with all elements generated from a standard normal distribution, and $\gammabf \in \real{5}$ is the corresponding coefficient vector, with all elements equal to one; $\Xbf_{ij} \in \real{64 \times 64}$ denotes the 2D matrix covariate, again with all elements from standard normal, and $\Bbf \in \real{64 \times 64}$ is the matrix coefficient. $\Bbf$ takes the value of 0 or 1, and contains a series of shapes as shown in Figure~\ref{fig:shapes}, including ``square", ``T-shape", ``disk", ``triangle", and ``butterfly". Our goal is to recover those shapes in $\Bbf$ by inferring the association between $Y_{ij}$ and $\Xbf_{ij}$ after adjusting for $\Zbf_{ij}$.

\subsection{Signal Recovery} 
\label{sec:signal}

As the true signal in reality is hardly of an exact low rank structure, the tensor model (\ref{eqn:r-tensorreg}) and the associated tensor GEE (\ref{eqn:tgee}) essentially provide a low rank \emph{approximation} to the true signal. It is thus important to verify if such an approximation is adequate. We set $n=500$, $m=4$, and show both the tensor GEE estimates under various ranks and the corresponding BIC values (\ref{eqn:bic}) in Figure~\ref{fig:shapes}. We first assume that the correlation structure is correctly specified, and will study potential misspecification in the next section. In this setup, ``square" has the true rank equal to 1, ``T-shape" has the rank 2, and the remaining shapes have the highest possible rank 64. It is clearly seen from the figure that the tensor GEE offers a sound recovery of the true signal, even for the signals with high rank or natural shape, e.g., ``disk" and ``butterfly". In addition, the BIC seems to identify the correct or best approximate rank for all the signals. 

\begin{figure}
\begin{center}
\begin{tabular}{c}
\includegraphics[width=5.25in]{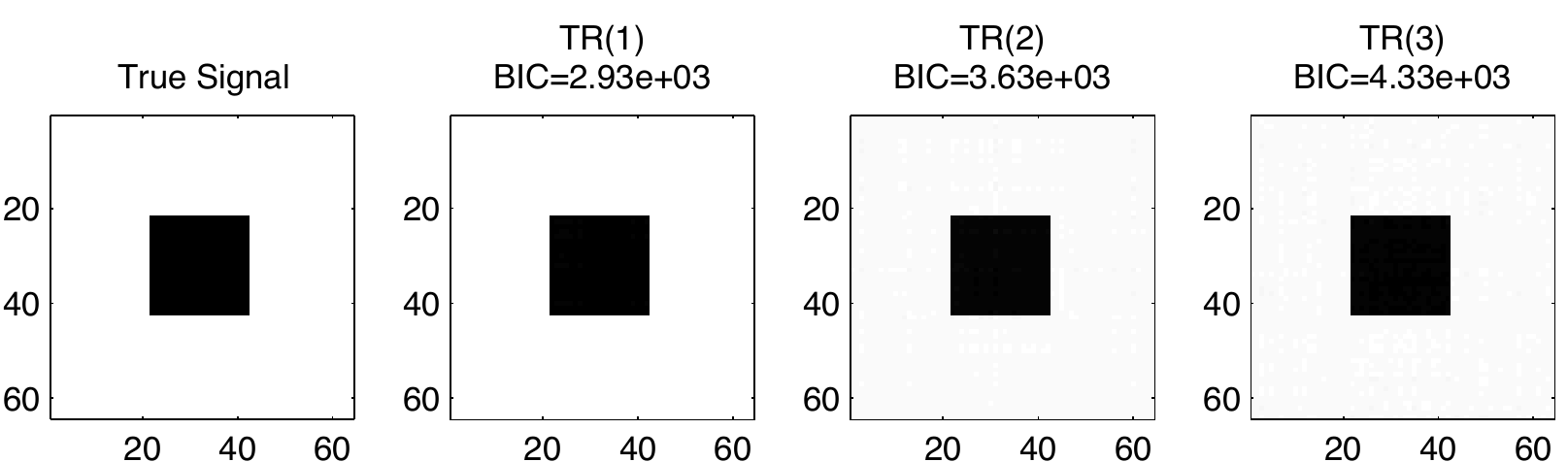} \\ 
\includegraphics[width=5.25in]{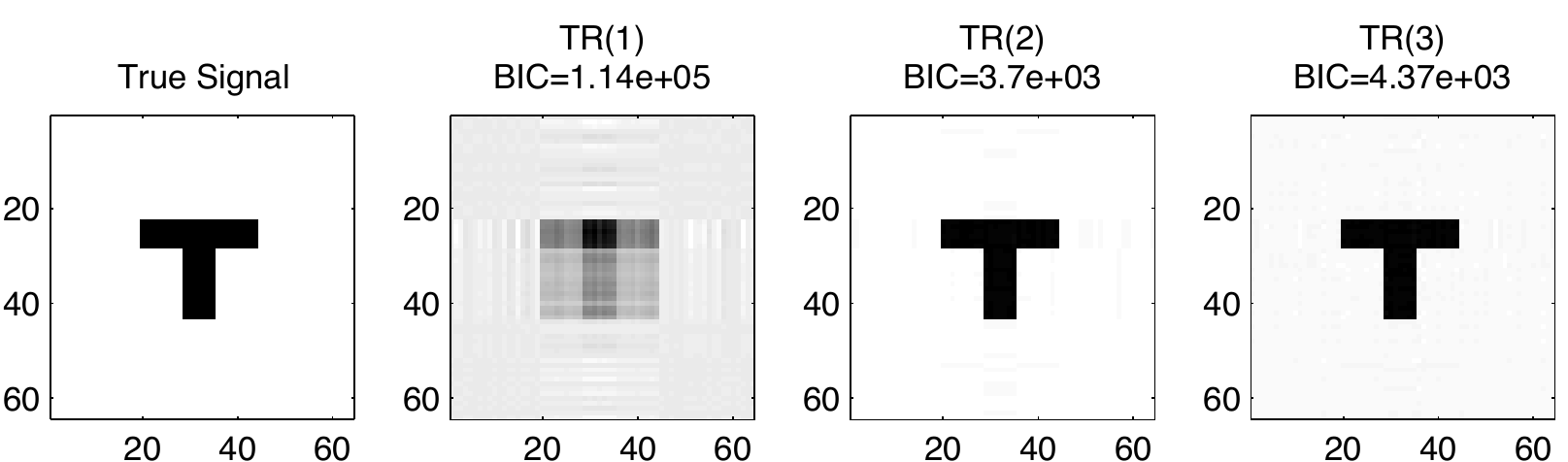} \\
\includegraphics[width=5.25in]{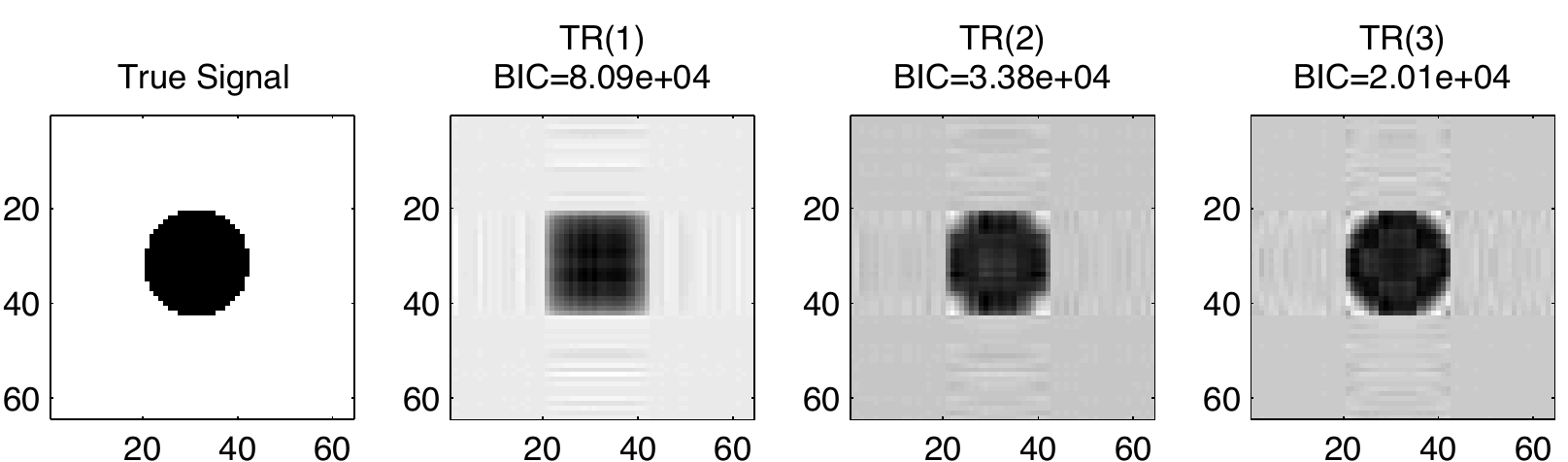} \\
\includegraphics[width=5.25in]{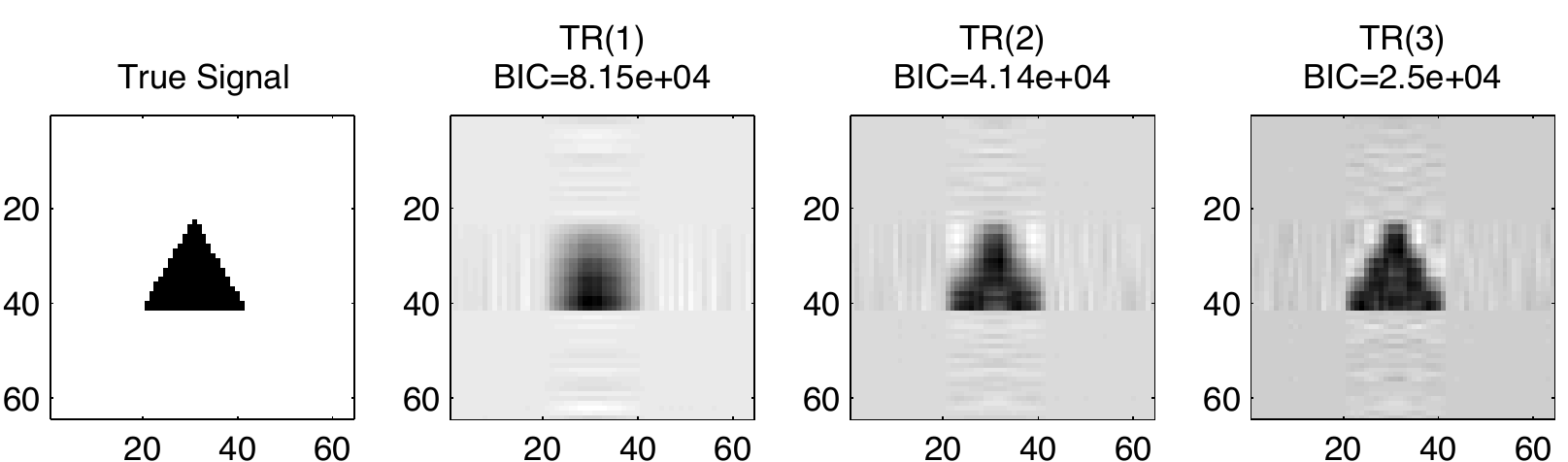} \\
\includegraphics[width=5.25in]{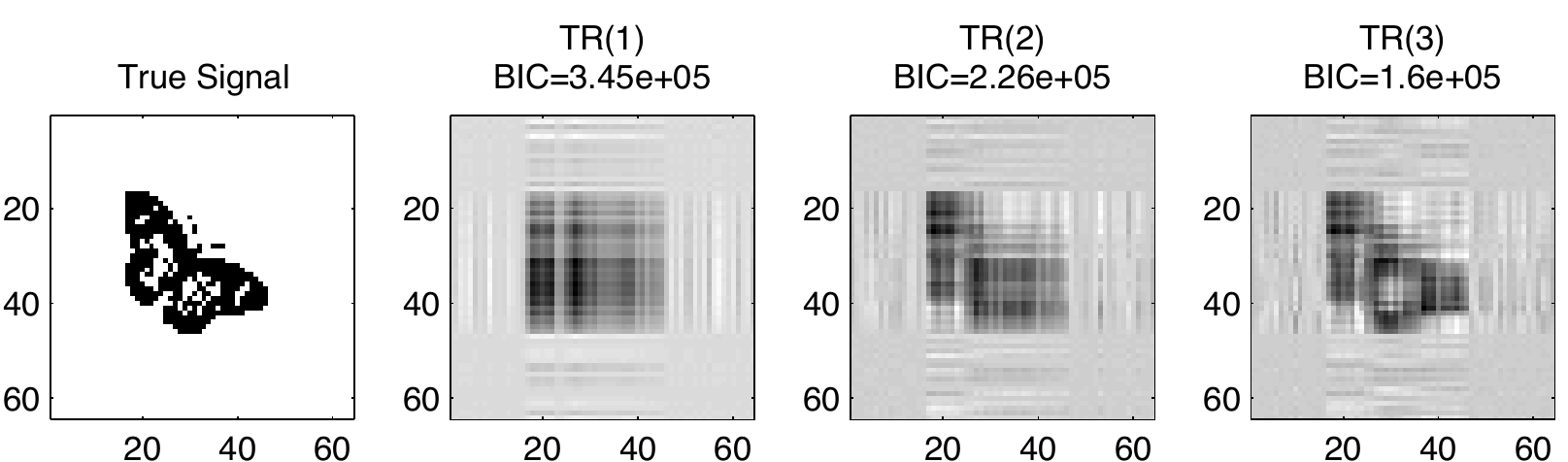}
\end{tabular}
\caption{True and recovered image signals by the tensor GEE with varying ranks. $n=500, m=4$. The correlation structure is correctly specified. TR$(R)$ means estimate from the rank-$R$ tensor model.}\label{fig:shapes}
\end{center}
\end{figure}

\subsection{Effect of Correlation Specification}

\begin{table}[t]
\caption{Bias, variance, and MSE of the tensor GEE estimates under various working correlation structures. Reported are the average out of 100 simulation replicates. The true intra-subject correlation is exchangeable with $\rho=0.8$.} \label{tab:correlation}
\begin{center}
\begin{tabular}{cccccc} 
\toprule
$n$ & $m$ & Working Correlation & Bias$^2$ & Variance & MSE \\
\midrule \midrule
50 & 10 & Exchangeable & 122.0& 383.6& {\bf 505.6(7.9)}\\
       &      & AR-1             &139.1 &530.0 &669.1(15.8)\\
      &      & Independence   &119.1 &393.9 & 513.0(11.0)\\
\midrule
100 & 10 & Exchangeable & 85.8& 128.9& {\bf 214.7(2.2)} \\
       &      & AR-1             &88.0 &159.1 &247.1(3.0)\\
      &      & Independence   &93.0 &141.2 &234.2(2.8)\\
\midrule
150 & 10 & Exchangeable &86.1 &51.3 & {\bf 137.2(0.6)}\\
       &      & AR-1           &85.6 &56.0 &141.6(0.6)\\
      &      & Independence   &84.9 & 62.3&147.2(0.9)\\ 
\bottomrule
\end{tabular}
\end{center}
\end{table}

We have shown that the tensor GEE estimator remains asymptotically consistent even when the working correlation structure is misspecified. However this describes only the \emph{large sample} behavior. In this section, we investigate potential effect of correlation misspecification when the sample size is \emph{small} or \emph{moderate}. 

We chose the ``butterfly" signal and fitted the tensor GEE model with three different working correlation structures: exchangeable, which is the correct specification in our setup, autoregressive of order one (AR-1), and independent. Table~\ref{tab:correlation} reports the averages and standard errors out of 100 replicates of the squared bias, the variance, and the mean squared error (MSE) of the tensor GEE estimate. We observe that the estimator based on the correct working correlation structure, i.e., the exchangeable structure, performs better than those based on misspecified correlation structures. When the sample size is moderate ($n=100$), all the estimators have comparable bias, while the difference in MSE mostly comes from the variance part of the estimator. This agrees with the theory that the choice of the working correlation structure affects the asymptotic variance of the estimator. When the sample size becomes relatively large ($n =150$), all the estimators perform similarly by the scaling term of $n^{-1/2}$ on the variance. When the sample size is small ($n=50$), all the estimators have relatively large bias, while the independence working structure yield similar results as the exchangeable structure. This suggests that, when the sample size is \emph{limited}, using a simple independence working structure is probably preferable compared to a more complex correlation structure.

\begin{figure}[t]
\begin{center}
\begin{tabular}{c}
\vspace{0.1in}
\includegraphics[width=5.in]{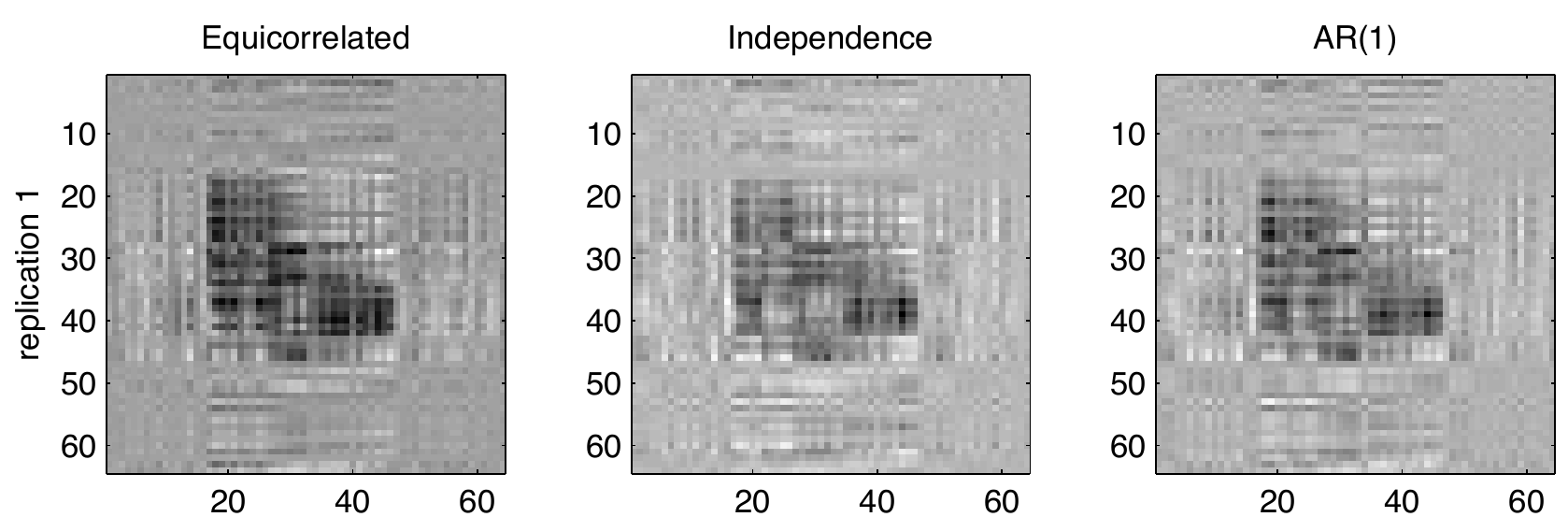} \\ 
\includegraphics[width=5.in]{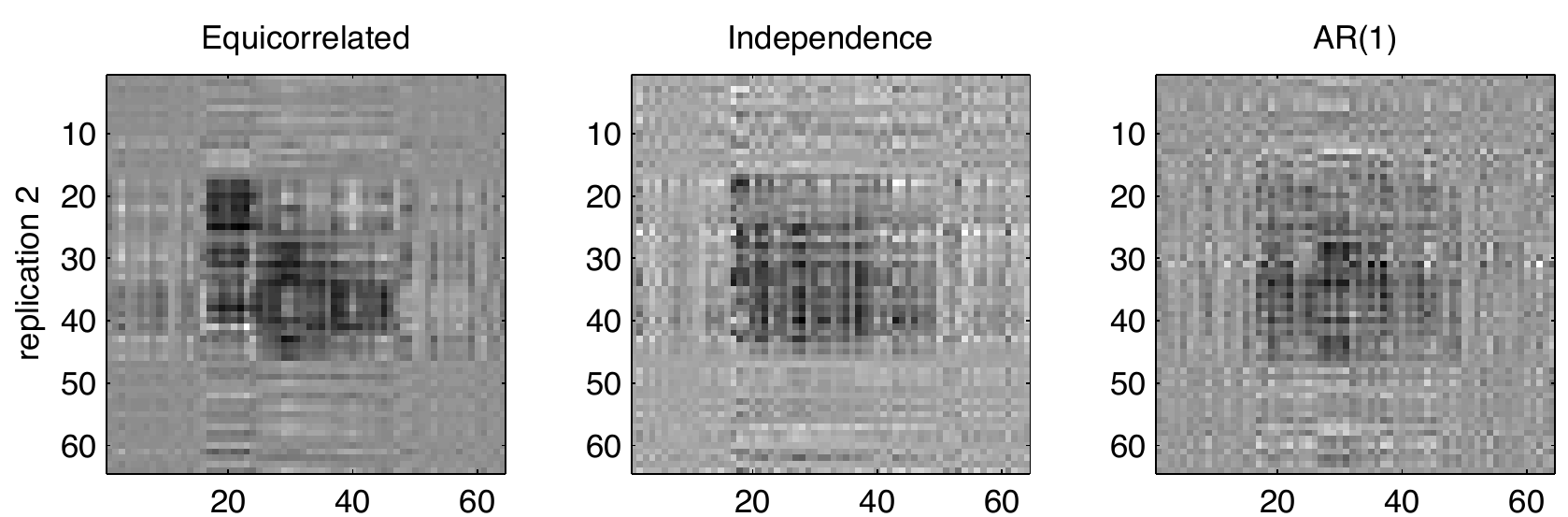} 
\end{tabular}
\caption{Snapshots of tensor GEE estimation with different working correlation structures. The true correlation is an equicorrelated structure. The comparison is \emph{row-wise}. The first row shows a replicate where the estimates are ``close" to the average behavior, and thus the visual quality of the estimates under different correlations structures are similar. The second row shows a replicate where the estimates are ``far away" from the average, then the estimate under the correct correlation structure (panel 1) is clearly superior than those under incorrect structures.}\label{fig:cor-snapshot}
\end{center}
\end{figure}

Nevertheless, we should bear in mind that the above observations are for the \emph{average} behavior of the estimate. Figure~\ref{fig:cor-snapshot} shows \emph{two snapshots} of the estimated signals under the three working correlations at $n=100$. The top panel is one replicate where the estimates are ``close" to the average in the sense that the bias, variance and MSE values for this single data realization are similar to those averages reported in Table~\ref{tab:correlation}. Consequently, the visual qualities of the three recovered signals are similar. The bottom panel, on the other hand, shows another replicate where the estimates are ``far away" from the average. Then the quality of the estimated signal under the correct working correlation structure is superior than the ones under the incorrect specifications. Such an observation suggests that, as long as the sample size of the study is moderate to large, a longitudinal model should be favored over the one that totally ignores potential intra-subject correlation.

\subsection{Regularized Estimation}
\label{sec:lasso}

We implemented the regularized tensor GEE with a lasso penalty, which extends the penalized GEE method of \citet{Wang2012} from vector to array covariate. It can identify relevant regions in images that are associated with the outcome, and this \emph{region selection} problem corresponds to \emph{variable selection} in classical vector covariate regressions. We studied the empirical performance by adopting the simulation setup described at the beginning of Section~\ref{sec:simulations}, but varying the sample size. The estimates of three shapes, ``T-shape", ``triangle", and ``butterfly", with and without regularizations, are shown in Figure~\ref{fig:regularization}. For the regularized tensor GEE, the penalty parameter $\lambda$ was selected based on the prediction accuracy on an independent validation set. It is clearly seen from the plot that, while increasing sample size improves estimation accuracy for both tensor GEE and regularized tensor GEE, regularization leads to a more accurate recovery, especially when the sample size is limited. As such we recommend the regularized tensor GEE for longitudinal imaging data analysis in practice. 

\begin{figure}
\begin{center}
\begin{tabular}{c}
\vspace{-0.3in}
\includegraphics[width=4.8in]{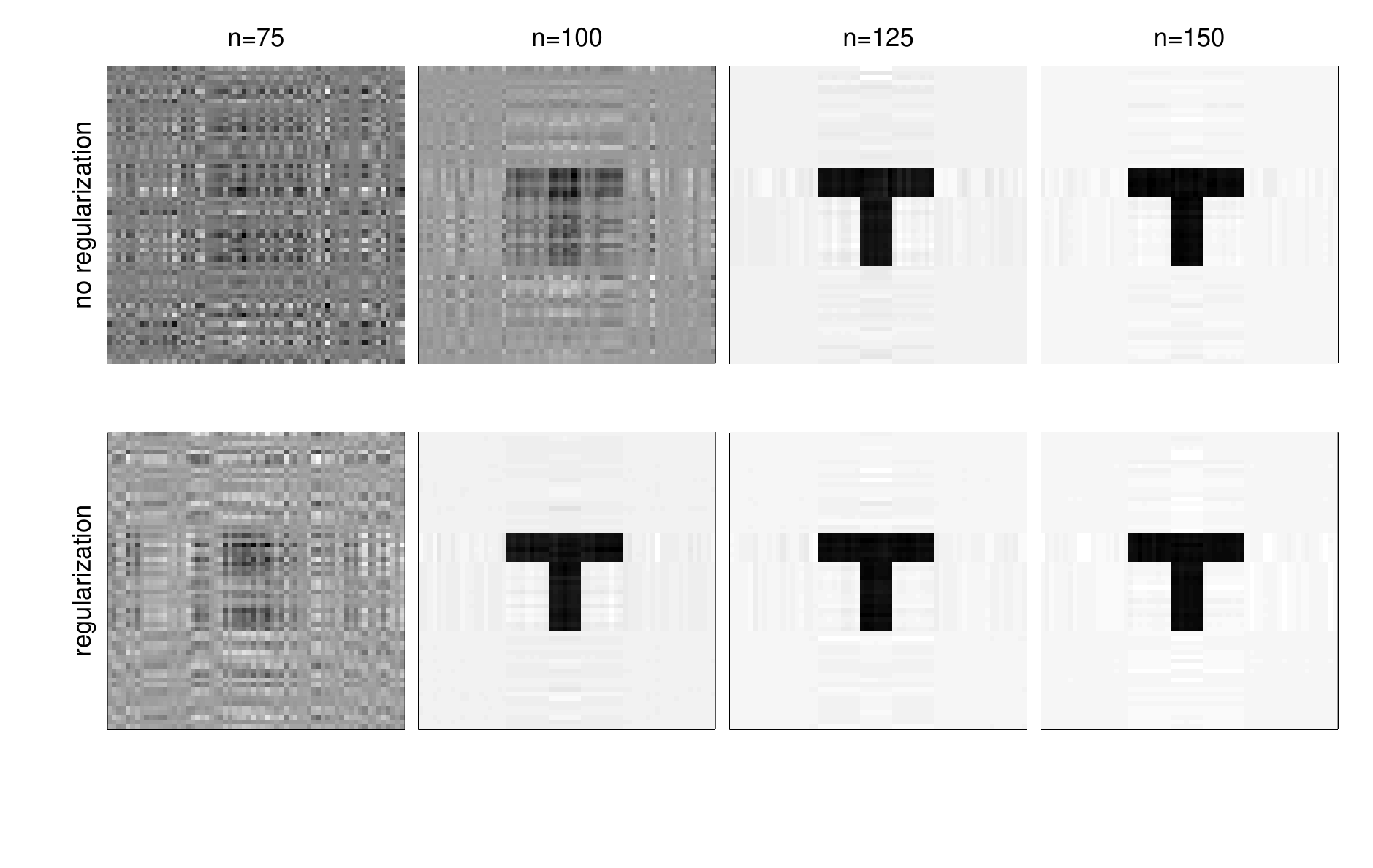} \\ \vspace{-0.3in}
\includegraphics[width=4.8in]{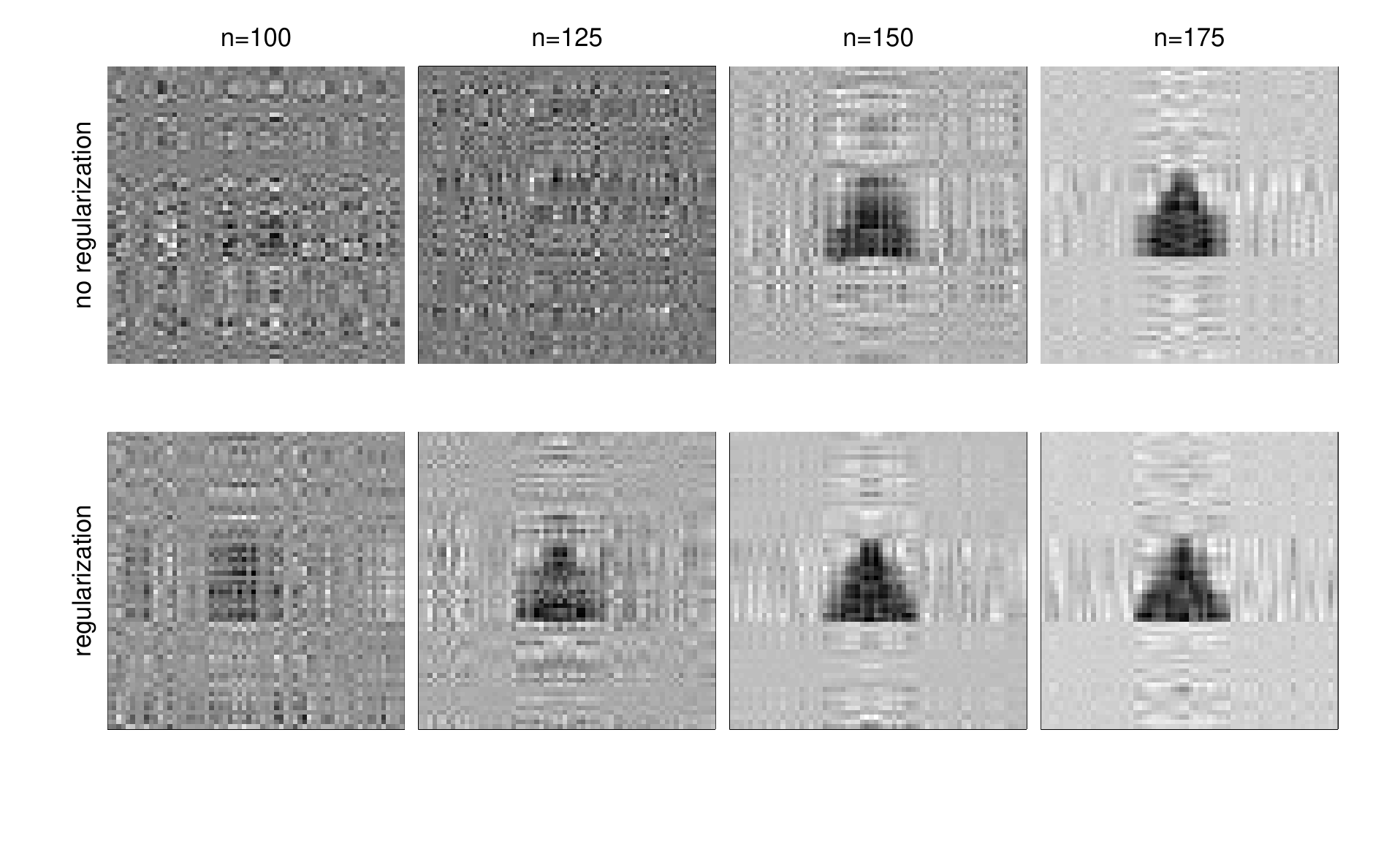} \\ \vspace{-0.3in}
\includegraphics[width=4.8in]{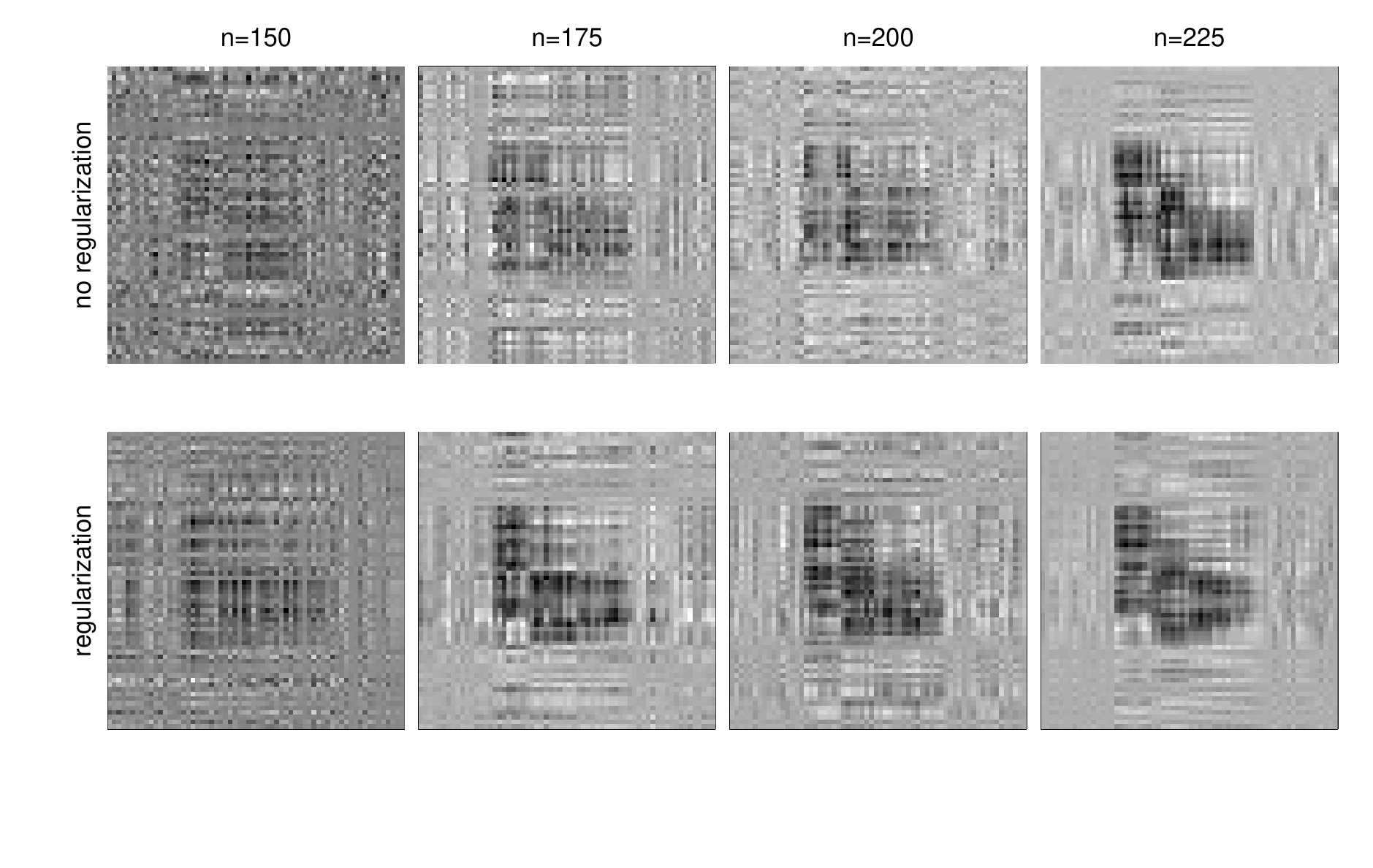} \\
\end{tabular}
\caption{Comparison of tensor GEE estimation with and without regularization under varying sample size. $m=4$. The matrix covariate is of size $64 \times 64$.}\label{fig:regularization}
\end{center}
\end{figure}

\section{Real Data Analysis}
\label{sec:realdata}

\subsection{Alzheimer's Disease}

Alzheimer's Disease (AD) is a progressive and irreversible neurodegenerative disorder and the leading form of dementia in elderly subjects. It is characterized by gradual impairment of cognitive and memory functions, and it has been projected to quadruple in its prevalence by the year 2050 \citep{Brookmeyer2007}. Amnestic mild cognitive impairment (MCI) is often a prodromal stage to Alzheimer's disease, and individuals with MCI may convert to AD at an annual rate as high as $15\%$ \citep{Petersen1999}. As such there is a pressing need for accurate and early diagnosis of AD and MCI, as well as monitoring their progression. The data we analyzed was obtained from the Alzheimer's Disease Neuroimaging Initiative (ADNI). It consists of $n=88$ MCI subjects with longitudinal MRI images of white matter at baseline, 6-month, 12-month, 18-month and 24-month ($m = 5$). Also recorded for each subject at multiple visits was the Mini Mental State Examination (MMSE) score. It measures the orientation to time and place, the immediate and delayed recall of three words, the attention and calculations, language, and visuoconstructional functions \citep{Folstein1975}, and is our response variable. A detailed description of acquiring MRI data from ADNI and the preprocessing protocol can be found in \citet{ZhangShen2012}. There are two scientific goals for this study. One is to predict the future clinical scores based on the data at previous time points, which is particularly useful for monitoring disease progression. The second is to identify brain subregions that are highly relevant to the disorder. We fitted tensor GEE to this data for both score prediction and region selection.

\subsection{Prediction and Disease Prognosis}

We downsized the original $256 \times 256 \times 256$ MRI images to $32 \times 32 \times 32$ via interpolation for computational simplicity. We first fitted tensor GEE using the data from baseline to 12-month, and used prediction of MMSE at 18-month to select the tuning parameter $\lambda$. Then we refitted the model using the data from baseline to 18-month under the selected $\lambda$, and evaluated the prediction accuracy of all subjects using the ``future" MMSE score at 24-month. The accuracy was evaluated by the rooted mean squared error (RMSE), $\{ n^{-1} \sum_{i=1}^n (Y_{im} - \hat Y_{im})^2 \}^{1/2}$, and the correlation, $\textrm{Corr}(Y_{im}, \hat Y_{im})$. This evaluation scheme is the same as that of \citet{ZhangShen2012}. Table~\ref{tab:predict} summarizes the results. It is seen that, for this data set, the best prediction was achieved under an AR(1) working correlation structure with $L_1$ regularization. The corresponding RMSE and correlation were 2.270 and 0.747, which are only slightly worse than the best reported RMSE 2.035 and correlation 0.786 in \citet{ZhangShen2012}. Note that \citet{ZhangShen2012} used multiple imaging modalities and additional clinical covariates, which are supposed to improve the prediction accuracy, while our study utilized only one imaging modality.

\begin{table}[t]
\caption{Prediction of future clinical MMSE scores using tensor GEE} \label{tab:predict}
\begin{center}
\begin{tabular}{lccccc} 
\toprule
 & \multicolumn{4}{c}{RMSE: $\{ \sum_{i=1}^nn^{-1} (Y_{im} - \hat Y_{im})^2 \}^{1/2}$ }\\
\cmidrule{2-5}
Working Correlation & Independence & Equicorrelated & AR(1) & Unstructured  \\ 
\midrule \midrule
regularization &2.460 &2.349 & {\bf 2.270} &2.570\\
 no regularization & 2.526&2.427 &2.429 &2.628\\
\midrule 
 & \multicolumn{4}{c}{Correlation: $\textrm{Corr}(Y_{im}, \hat Y_{im})$ }\\
\cmidrule{2-5}
Working Correlation & Independence & Equicorrelated & AR(1) & Unstructured  \\
\midrule \midrule
regularization &0.705 &0.733 & {\bf 0.747} &0.700\\
 no regularization & 0.701&0.716 &0.725 & 0.693\\
\bottomrule
\end{tabular}
\end{center}
\end{table}

\subsection{Region Selection}

We applied the lasso regularized tensor GEE to this data, and Figure~\ref{fig:sparse} shows the estimate (marked in red) overlaid on an image of an arbitrarily chosen subject, with three views, top, side and bottom, respectively. The identified anatomical regions mainly correspond to cerebral cortex, part of temporal lobe, parietal lobe, and frontal lobe \citep{Braak1991, Desikan2009, Yao2012}. With AD, patients experience significant widespread damage over the brain, causing shrinkage of brain volume \citep{Yao2012, Harasty1999} and thinning of cortical thickness \citep{Desikan2009, Yao2012}. The affected brain regions include those involved in controlling language (Broca's area) \citep{Harasty1999}, reasoning (superior and inferior frontal gyri) \citep{Harasty1999}, part of sensory area (primary auditory cortex, olfactory cortex, insula, and operculum) \citep{Braak1991, Lee2013}, somatosensory association area \citep{Yao2012, Tales2005, Mapstone2003}, memory loss (hippocampus) \citep{Heijer2010}, and motor function \citep{Buchman2011}. It is interesting to note that these regions are affected starting at different stages of AD, indicating the capability of the proposed method to locate brain atrophies as disease progresses. Specifically, hippocampus, which is highly correlated to memory loss, is commonly detected at the earliest stage of the disease. Regions related to language, communication, and motor functions are normally detected at the later stages of the disease. The fact that our findings are consistent with the results reported in previous studies demonstrates the efficacy of our proposed method in identifying correct biomarkers that are closely related to AD/MCI.

\begin{figure}[t]
\begin{center}
\begin{tabular}{ccc}
\includegraphics[width=1.5in]{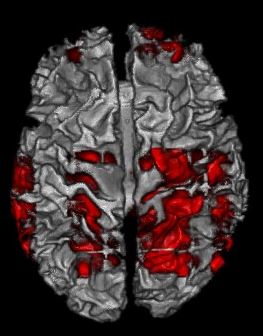}  & 
\includegraphics[width=1.75in]{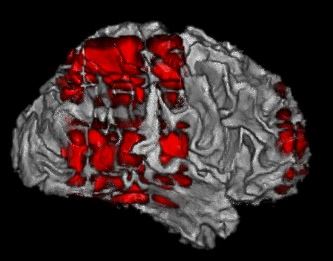} & 
\includegraphics[width=1.5in]{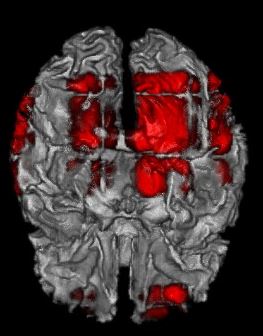}
\end{tabular}
\caption{The ADNI data: regularized estimate overlaid on a randomly selected subject.}\label{fig:sparse}
\end{center}
\end{figure}

\section{Discussions}
\label{sec:discussion}

We have proposed a tensor GEE approach for analyzing longitudinal imaging data. Our method combines the powerful GEE idea for handling longitudinal correlation and the low rank tensor decomposition to reduce the vast dimensionality of imaging data. The proposed algorithm scales well with imaging data size and is easy to implement using existing statistical softwares. Simulation studies and real data analysis show the advantage of our method for both signal recovering and prediction.

In the current paper, we have considered an image covariate together with a conventional vector covariate. Extending to joint multi-modality imaging analysis is conceptually easy: simply adding more array covariates into the systematic component \eqref{eqn:r-tensorreg}. However this brings up other issues such as joint selection of ranks for multiple array covariates, properly defining interactions between tensor covariates, and even higher volume of data. These important yet nontrivial questions deserve further investigation.

\baselineskip=15pt
%% used by Hua
%\bibliography{../../../bib-HZ}
%\bibliographystyle{asa}

% used by Lexin
\bibliography{ref-tensor}
\bibliographystyle{apalike}

\newpage
\baselineskip=21pt
\section*{Appendix: Technical Proofs}

\noindent
\textbf{Outline of the proofs} \\
\noindent
We prove the results for the diverging case (Theorem 3 and Theorem 4) in the appendix. One can prove the results for the fixed case (Theorem 1 and Theorem 2) by using the same techniques below and replacing $p_n$ with a fixed positive constant. 

The proof of Lemma 1 is similar to the one of Theorem 3 by dropping the terms involving the working correlation matrix and thus is omitted here. 

To facilitate the proof, we introduce the following notations. Denote $\widehat{\boldsymbol\beta}_n=\boldsymbol\beta_{\hat{\Bbf}}$ and $\boldsymbol\beta_0=\betabf_{\Bbf_0}$. Recall that the CP decomposition ensures that $\Bbf$ is uniquely determined by $\boldsymbol\beta_n \in \real{R \sum_{d=1}^D p_d}$. Denote $\Jbf(\boldsymbol\beta)=[\Jbf_1 \cdots \Jbf_D]$, and note that under tensor structure $\partial \theta_{ij}/\partial \boldsymbol\beta=\Jbf(\boldsymbol\beta)\text{vec}(\Xbf_{ij})$. Recall the generalized estimating equations without vector covariates can be written as 
\begin{align*}
\sbf_n(\boldsymbol\beta_n)&=\sum_{i=1}^n\Jbf\trans(\boldsymbol\beta_n)\text{vec}(\Xbf_i)\Abf^{1/2}_i(\boldsymbol\beta_n)\widehat{\Rbf}^{-1}\Abf^{-1/2}_i(\boldsymbol\beta_n)(\Ybf_i-\boldsymbol\mu_i(\boldsymbol\beta_n)).
\end{align*}

The main technique to prove Theorem 3 is the sufficient condition for existence and consistency of a root of equations proposed in \citet{ortega2000iterative}. To check this condition, the following Lemma \ref{Sbar} - \ref{Dn1} are proposed. Lemma \ref{Sbar} provides a useful approximation to the generalized estimating equations $\sbf_n(\boldsymbol\beta_0)$ based on the Condition (A4*) of the working correlation matrix. This facilitates the later evaluations of the moments of the generalized estimating equations by treating the intra-subject correlation as known. Lemma \ref{Dbar} further establishes the approximation to the negative gradients of the generalized estimating equations. Lemma \ref{Dn1} refines this approximation to the negative gradients at one more step, providing the foundations for the Talyor expansion of generalized estimating equations at the true value. 

Based on Theorem 3, the proof of Theorem 4 is straightforward by evaluating the covariance matrix of the generalized estimating equations and applying the Lindeberg-Feller central limit theorem. 
 
\begin{lemma}\label{Sbar}
Under Conditions (A1*)-(A9*), $p_n=o(n^{1/2})$, then $||\tilde{\sbf}_n(\boldsymbol\beta_0)-\sbf_n(\boldsymbol\beta_0)||=O_p(p_n)$, where $\tilde{\sbf}_n(\boldsymbol\beta_0)$ is $\sbf_n(\boldsymbol\beta_0)$ with $\widehat{\Rbf}$ replaced by $\tilde{\Rbf}$.
\end{lemma}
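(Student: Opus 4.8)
The plan is to form the difference directly and then isolate the discrepancy between the two correlation matrices. Writing $\Wbf_i = \Jbf\trans(\boldsymbol\beta_0)\vect(\Xbf_i)\Abf_i^{1/2}(\boldsymbol\beta_0)$, an $(R\sum_{d=1}^D p_d)\times m$ matrix, and the standardized residual $\ebf_i = \Abf_i^{-1/2}(\boldsymbol\beta_0)(\Ybf_i-\mubf_i(\boldsymbol\beta_0))$, the two estimating functions differ only in the middle factor, so
\begin{align*}
\tilde{\sbf}_n(\boldsymbol\beta_0)-\sbf_n(\boldsymbol\beta_0) = \sum_{i=1}^n \Wbf_i\,(\tilde{\Rbf}^{-1}-\widehat{\Rbf}^{-1})\,\ebf_i .
\end{align*}
Since the number of time points $m$ is fixed, I would expand the $m\times m$ matrix $\Delta=\tilde{\Rbf}^{-1}-\widehat{\Rbf}^{-1}$ entrywise, which reduces the display to $\sum_{k,l=1}^m \Delta_{kl}\,\ubf_{kl}$, where $\ubf_{kl}=\sum_{i=1}^n(\Wbf_i)_{\cdot k}(\ebf_i)_l$ is a sum of $n$ independent mean-zero random vectors of dimension $R\sum_{d=1}^D p_d \sim p_n$.

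The core of the argument is a second-moment bound on each $\ubf_{kl}$. Because the subjects are independent and $\E[(\ebf_i)_l]=0$, all cross terms vanish and $\E\|\ubf_{kl}\|^2 = \sum_{i=1}^n \|(\Wbf_i)_{\cdot k}\|^2\,(\Rbf_0)_{ll}$, using $\Cov(\ebf_i)=\Rbf_0$. Conditions (A1*), (A7*) and (A8*) make every entry of $(\Wbf_i)_{\cdot k}$ uniformly bounded, so that $\|(\Wbf_i)_{\cdot k}\|^2 = O(p_n)$; combined with $(\Rbf_0)_{ll}$ bounded under (A4*), this gives $\E\|\ubf_{kl}\|^2 = O(np_n)$, hence $\|\ubf_{kl}\| = O_p(\sqrt{np_n})$ by Markov's inequality (conditioning on the design, or taking full expectation when $\Xbf$ is random, yields the same rate).

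Combining this with (A4*), which supplies $|\Delta_{kl}|\le\|\Delta\|_\textbf{F}=O_p(\sqrt{p_n/n})$, and noting that there are only $m^2=O(1)$ summands, a triangle-inequality bound delivers
\begin{align*}
\|\tilde{\sbf}_n(\boldsymbol\beta_0)-\sbf_n(\boldsymbol\beta_0)\| \le \sum_{k,l=1}^m |\Delta_{kl}|\,\|\ubf_{kl}\| = O_p\left(\sqrt{p_n/n}\cdot\sqrt{np_n}\right)=O_p(p_n),
\end{align*}
which is the claim. I expect the main obstacle to be the second-moment computation rather than the algebra: the rate hinges on recognizing that $\ubf_{kl}$ lives in the diverging-dimensional parameter space, so its norm grows like $\sqrt{np_n}$ rather than $\sqrt{n}$, and it is exactly this extra factor of $\sqrt{p_n}$ that upgrades a naive $O_p(\sqrt{p_n})$ bound to $O_p(p_n)$. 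Keeping the bound tight requires invoking the boundedness conditions on the design and the link derivatives to control $\|(\Wbf_i)_{\cdot k}\|^2$ uniformly in $i$, and relying on $m$ being fixed so that the entrywise expansion of $\Delta$ costs only a constant factor.
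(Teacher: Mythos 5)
Your proof is correct and takes essentially the same route as the paper's: both expand the difference $\tilde{\Rbf}^{-1}-\widehat{\Rbf}^{-1}$ entrywise into $m^2 = O(1)$ terms, bound the second moment of each inner sum of independent mean-zero vectors by $O(np_n)$ using the boundedness conditions on the design, the gradients, and the link (the paper via the trace identity, you via $\|(\Wbf_i)_{\cdot k}\|^2 = O(p_n)$, which are the same quantity), and then multiply the resulting $O_p(\sqrt{np_n})$ bound by the $O_p(\sqrt{p_n/n})$ rate supplied by condition (A4*). There is no gap.
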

\begin{proof}[Proof of Lemma \ref{Sbar}]
Consider 
\begin{align*}
\tilde{\sbf}_n(\boldsymbol\beta_n)&=\sum_{i=1}^n\Jbf\trans(\boldsymbol\beta_n)\text{vec}(\Xbf_i)\Abf^{1/2}_i(\boldsymbol\beta_n)\tilde{\Rbf}^{-1}\Abf^{-1/2}_i(\boldsymbol\beta_n)(\Ybf_i-\boldsymbol\mu_i(\boldsymbol\beta_n)).
\end{align*}
Denote $\{r_{i,j}\}_{1\le i,  j \le m}$ the $(i,j)$-th element of $\widehat{\Rbf}^{-1}-\tilde{\Rbf}^{-1}$. By Condition (A4*), $r_{i,j}=O_p(\sqrt{p_n/n})$. Note that
\begin{align*}
&\sbf_n(\boldsymbol\beta_0)-\tilde{\sbf}_n(\boldsymbol\beta_0)\\
=&\sum_{i=1}^n\sum_{j=1}^m\sum_{k=1}^mr_{j,m}\sigma_{ij}(\boldsymbol\beta_0)\epsilon_{ik}(\boldsymbol\beta_0)\Jbf\trans(\boldsymbol\beta_0)\text{vec}(\Xbf_{ij})\\
=&\sum_{j=1}^m\sum_{k=1}^mr_{j,m}\Big[\sum_{i=1}^n\sigma_{ij}(\boldsymbol\beta_0)\epsilon_{ik}(\boldsymbol\beta_0)\Jbf\trans(\boldsymbol\beta_0)\text{vec}(\Xbf_{ij})\Big],
\end{align*}
where $\epsilon_{ik}(\boldsymbol\beta_0)=\sigma_{ik}^{-1}(\boldsymbol\beta_0)(Y_{ik}-\mu_{ik}(\boldsymbol\beta_0))$. By Condition (A6*), $\mathbb{E}[\epsilon_{ik}(\boldsymbol\beta_0)]=O_p(1)$. Note that for any $1 \le j, k \le m$, 
\begin{align*}
&\mathbb{E}\Big[||\sum_{i=1}^n\sigma_{ij}(\boldsymbol\beta_0)\epsilon_{ik}(\boldsymbol\beta_0)\Jbf\trans(\boldsymbol\beta_n)\text{vec}(\Xbf_{ij})||^2\Big]\\
=&\sum_{i=1}^n\sigma^2_{ij}(\boldsymbol\beta_0)\mathbb{E}[\epsilon_{ik}^2(\boldsymbol\beta_0)]\text{vec}\trans(\Xbf_{ij})\Jbf(\boldsymbol\beta_0)\Jbf\trans(\boldsymbol\beta_0)\text{vec}(\Xbf_{ij})\\
=&\sum_{i=1}^n\sigma^2_{ij}(\boldsymbol\beta_0)\mathbb{E}[\epsilon_{ik}^2(\boldsymbol\beta_0)]\text{Tr}(\Jbf\trans(\boldsymbol\beta_0)\text{vec}(\Xbf_{ij})\text{vec}\trans(\Xbf_{ij})\Jbf(\boldsymbol\beta_0))\\
\le &Cnp_n,
\end{align*}
for some constant $C>0$ by Condition (A1*), (A2*) and (A7*). Since $r_{i,j}=O_p(\sqrt{p_n/n})$, the proof is complete.
\end{proof}

Consider $\Dbf_n(\boldsymbol\beta_n)=-\partial \sbf_n(\boldsymbol\beta_n)/\partial \boldsymbol\beta_n$, $\tilde{\Dbf}_n(\boldsymbol\beta_n)=-\partial \tilde{\sbf}_n(\boldsymbol\beta_n)/\partial \boldsymbol\beta_n$. Lemma \ref{Dbar} establishes the approximation to the negative gradients of the estimating equations. 

\begin{lemma}\label{Dbar}
Under Conditions (A1*)-(A9*), for any $\triangle>0$, 
\begin{align*}
\sup_{||\boldsymbol\beta_n-\boldsymbol\beta_0||\le \triangle \sqrt{p_n/n}}|\lambda_{\max}[\tilde{\Dbf}_n(\boldsymbol\beta_n)-\Dbf_n(\boldsymbol\beta_n)]|=O_p(\sqrt{p_nn}),\\
\sup_{||\boldsymbol\beta_n-\boldsymbol\beta_0||\le \triangle \sqrt{p_n/n}}|\lambda_{\min}[\tilde{\Dbf}_n(\boldsymbol\beta_n)-\Dbf_n(\boldsymbol\beta_n)]|=O_p(\sqrt{p_nn}).
\end{align*}
\end{lemma}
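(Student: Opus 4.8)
The plan is to bound the operator (spectral) norm $\|\cdot\|$ of $\tilde{\Dbf}_n(\boldsymbol\beta_n)-\Dbf_n(\boldsymbol\beta_n)$, uniformly over the shrinking set, since $|\lambda|\le\|\Mbf\|$ for every eigenvalue $\lambda$ of a square matrix $\Mbf$; thus it suffices to prove
\begin{align*}
\sup_{\|\boldsymbol\beta_n-\boldsymbol\beta_0\|\le\triangle\sqrt{p_n/n}}\big\|\tilde{\Dbf}_n(\boldsymbol\beta_n)-\Dbf_n(\boldsymbol\beta_n)\big\|=O_p(\sqrt{p_nn}),
\end{align*}
which yields both displayed statements at once. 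Since $\widehat{\Rbf}$, $\tilde{\Rbf}$ do not depend on $\boldsymbol\beta_n$ and $\tilde{\sbf}_n$ differs from $\sbf_n$ only through $\widehat{\Rbf}^{-1}\mapsto\tilde{\Rbf}^{-1}$, and $\Rbf^{-1}$ enters linearly, differentiating shows $\tilde{\Dbf}_n-\Dbf_n$ equals the Jacobian expression with $\widehat{\Rbf}^{-1}$ replaced by $\boldsymbol\Delta:=\tilde{\Rbf}^{-1}-\widehat{\Rbf}^{-1}$, and $\|\boldsymbol\Delta\|\le\|\boldsymbol\Delta\|_{\textbf{F}}=O_p(\sqrt{p_n/n})$ by (A4$^*$).

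First I would expand $\Dbf_n=-\partial\sbf_n/\partial\boldsymbol\beta_n$ by the product rule on its three $\boldsymbol\beta_n$-dependent factors, $\Jbf\trans(\boldsymbol\beta_n)$, $\Abf_i^{\pm1/2}(\boldsymbol\beta_n)$ and $\boldsymbol\mu_i(\boldsymbol\beta_n)$. With $\Wbf_i:=\Jbf\trans(\boldsymbol\beta_n)\vect(\Xbf_i)$ and $\boldsymbol\epsilon_i:=\Abf_i^{-1/2}(\Ybf_i-\boldsymbol\mu_i)$, this produces one \emph{information} term, from differentiating $\boldsymbol\mu_i$, of the form $\sum_{i=1}^n\Wbf_i\Mbf_i\Wbf_i\trans$ with $\Mbf_i=\Abf_i^{1/2}\boldsymbol\Delta\,\Abf_i^{-1/2}\diag\{\mu^{(1)}(\theta_{ij})\}$, together with \emph{residual} terms that are linear in the centered $\boldsymbol\epsilon_i$ and come from differentiating $\Jbf\trans$ (bringing in the Hessian $\Hbf(\boldsymbol\beta_n,\Xbf_{ij})$ of (A9$^*$)) and $\Abf_i^{\pm1/2}$ (bringing in $\mu^{(2)}$ and the extra factor $\partial\theta_{ij}/\partial\boldsymbol\beta_n$).

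For the information term the essential trick is to \emph{avoid} summing per-summand operator norms, which would give the too-large rate $\sqrt{n}\,p_n^{3/2}$ because $\|\Wbf_i\|^2=O(p_n)$. Instead, since $\sup_i\|\Mbf_i\|=O_p(\sqrt{p_n/n})$ by (A8$^*$) and the bound on $\boldsymbol\Delta$, I would use Cauchy--Schwarz across $i$: for unit vectors $\ubf,\vbf$,
\begin{align*}
\Big|\ubf\trans\Big(\sum_{i=1}^n\Wbf_i\Mbf_i\Wbf_i\trans\Big)\vbf\Big|\le\sup_i\|\Mbf_i\|\,\sqrt{\ubf\trans\Big(\sum_{i=1}^n\Wbf_i\Wbf_i\trans\Big)\ubf}\,\sqrt{\vbf\trans\Big(\sum_{i=1}^n\Wbf_i\Wbf_i\trans\Big)\vbf},
\end{align*}
and then invoke (A3$^*$), which gives $\|\sum_i\Wbf_i\Wbf_i\trans\|\le c_2n$. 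This yields $O_p(\sqrt{p_n/n})\cdot c_2n=O_p(\sqrt{p_nn})$. The Hessian-type residual term is then immediate from the triangle inequality: $\|\Hbf\|\le c_4$ by (A9$^*$) introduces \emph{no} growing dimension, so each summand is $O_p(\sqrt{p_n/n})\|\boldsymbol\epsilon_i\|$, and $\sum_i\|\boldsymbol\epsilon_i\|=O_p(n)$ by (A5$^*$) again gives $O_p(\sqrt{p_nn})$.

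The hard part will be the residual terms from $\partial\Abf_i^{\pm1/2}/\partial\boldsymbol\beta_n$: each carries the extra factor $\|\partial\theta_{ij}/\partial\boldsymbol\beta_n\|=O(\sqrt{p_n})$ (by (A7$^*$), as this vector has $\sim p_n$ bounded entries), so a naive triangle bound again explodes to $\sqrt{n}\,p_n^{3/2}$. Here one must exploit that these terms are linear and mean-zero in $\boldsymbol\epsilon_i$ and independent across subjects; a second-moment computation controlled by (A5$^*$)--(A6$^*$) and (A8$^*$) gives $E\|\cdot\|_{\textbf{F}}^2=O_p(p_n^3)$, hence a bound $O_p(p_n^{3/2})$, which is $o(\sqrt{p_nn})$ \emph{exactly} when $p_n=o(n^{1/2})$ --- precisely the hypothesis of Theorem~\ref{thm:consistency_diverge}. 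Thus these terms are negligible and the information and Hessian terms dominate at $O_p(\sqrt{p_nn})$. Throughout, the suprema over $\{\|\boldsymbol\beta_n-\boldsymbol\beta_0\|\le\triangle\sqrt{p_n/n}\}$ are controlled because (A3$^*$), (A7$^*$), (A8$^*$), (A9$^*$) are imposed uniformly on this set.
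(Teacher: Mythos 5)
Your proposal is correct in outline and takes essentially the same route as the paper's proof: the same product-rule decomposition of $\Dbf_n$ into an information term, two terms from differentiating $\Abf_i^{\pm 1/2}$, and a Hessian term (the paper's $\tilde{\Dbf}_{n1},\ldots,\tilde{\Dbf}_{n4}$); the same observation that the difference is linear in $\boldsymbol\Delta=\tilde{\Rbf}^{-1}-\widehat{\Rbf}^{-1}$ with $\|\boldsymbol\Delta\|_{\textbf{F}}=O_p(\sqrt{p_n/n})$ by (A4$^*$); the same Cauchy--Schwarz-across-$i$ plus (A3$^*$) argument giving $O_p(\sqrt{p_n/n})\cdot O(n)$ for the information term; and a second-moment computation for the residual terms, which is exactly how the paper controls its term $L_n(\boldsymbol\beta_0)$.

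Two points in your sketch need tightening, and both are handled in the paper by explicit extra decompositions. First, your claim that the $\partial\Abf_i^{\pm1/2}$ terms are ``linear and mean-zero in $\boldsymbol\epsilon_i$'' is only true at $\boldsymbol\beta_0$: over the ball, the residuals $\Ybf_i-\boldsymbol\mu_i(\boldsymbol\beta_n)$ carry a drift $\boldsymbol\mu_i(\boldsymbol\beta_0)-\boldsymbol\mu_i(\boldsymbol\beta_n)$, so you must split ${\bf C}_i(\boldsymbol\beta_n)$ into a drift part ${\bf C}_{i1}(\boldsymbol\beta_n)$ and a centered part ${\bf C}_{i2}(\boldsymbol\beta_0)$ (the paper's $J_{n1}+J_{n2}$), bounding the drift part deterministically; moreover, a second-moment bound is pointwise in the parameter, so to obtain a supremum over the ball you must also peel every $\boldsymbol\beta_n$-dependent factor off the stochastic term until the random part is frozen at $\boldsymbol\beta_0$ (the paper's further decomposition of $J_{n2}$ culminating in the bound on $L_n(\boldsymbol\beta_0)$). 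Your closing appeal to the uniformity of (A3$^*$), (A7$^*$)--(A9$^*$) covers the deterministic factors but not this stochastic piece. Second, your negligibility conclusion $O_p(p_n^{3/2})=o_p(\sqrt{p_nn})$ invokes $p_n=o(n^{1/2})$, which is a hypothesis of Theorem 3 but not of the lemma as stated; the paper instead asserts $E\|L_n(\boldsymbol\beta_0)\|^2=O(np_n)$, making these residual terms $O_p(\sqrt{np_n})$ --- the same order as the target --- without appealing to that rate. (In fact your $O_p(p_n^3)$ accounting is what the paper's own chain of inequalities delivers once both $\boldsymbol\Delta$ factors are extracted, so this discrepancy is harmless in context, where the lemma is only invoked under Theorem 3's rate condition, but it should be stated as an assumption if you use it.)
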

\begin{proof}[Proof of Lemma \ref{Dbar}]
Similar to Lemma C.1. of \citet{Wang2011}, it can be shown by direct calculation that
\begin{equation*}
\tilde{\Dbf}_n(\boldsymbol\beta_n)=\tilde{\Dbf}_{n1}(\boldsymbol\beta_n)+\tilde{\Dbf}_{n2}(\boldsymbol\beta_n)+\tilde{\Dbf}_{n3}(\boldsymbol\beta_n)+\tilde{\Dbf}_{n4}(\boldsymbol\beta_n),
\end{equation*}
where
\begin{align*}
\tilde{\Dbf}_{n1}(\boldsymbol\beta_n)&=\sum_{i=1}^n\Jbf\trans(\boldsymbol\beta_n)\text{vec}(\Xbf_i)\Abf^{1/2}_i(\boldsymbol\beta_n)\tilde{\Rbf}^{-1}\Abf^{1/2}_i(\boldsymbol\beta_n)\text{vec}\trans(\Xbf_i)\Jbf(\boldsymbol\beta_n),\\
\tilde{\Dbf}_{n2}(\boldsymbol\beta_n)&=\frac{1}{2}\sum_{i=1}^n\Jbf\trans(\boldsymbol\beta_n)\text{vec}(\Xbf_i)\Abf^{1/2}_i(\boldsymbol\beta_n)\tilde{\Rbf}^{-1}\Abf^{-3/2}_i(\boldsymbol\beta_n){\bf C}_i(\boldsymbol\beta_n)\Fbf_i(\boldsymbol\beta_n)\text{vec}\trans(\Xbf_i)\Jbf(\boldsymbol\beta_n),\\
\tilde{\Dbf}_{n3}(\boldsymbol\beta_n)&=-\frac{1}{2}\sum_{i=1}^n\Jbf\trans(\boldsymbol\beta_n)\text{vec}(\Xbf_i)\Abf^{1/2}_i(\boldsymbol\beta_n)\Fbf_i(\boldsymbol\beta_n)\Kbf_i(\boldsymbol\beta_n)\text{vec}\trans(\Xbf_i)\Jbf(\boldsymbol\beta_n),\\
\tilde{\Dbf}_{n4}(\boldsymbol\beta_n)&=\sum_{i=1}^n\sum_{j=1}^m{\bf e}_j\trans\Abf^{1/2}_i(\boldsymbol\beta_n)\tilde{\Rbf}^{-1}\Abf^{-1/2}_i(\boldsymbol\beta_n)(\Ybf_i-\boldsymbol\mu_i(\boldsymbol\beta_n)){\bf H}(\boldsymbol\beta_n,\Xbf_{ij}),
\end{align*}
with
\vspace{-0.1in}
\begin{align*}
{\bf C}_i(\boldsymbol\beta_n)&=\text{diag}\Big(Y_{i1}-\mu_{i1}(\boldsymbol\beta_n), \dots, Y_{im}-\mu_{im}(\boldsymbol\beta_n)\Big),\\
\Fbf_i(\boldsymbol\beta_n)&=\text{diag}\Big(\mu_{i1}^{(2)}(\boldsymbol\beta_n), \dots, \mu_{im}^{(2)}(\boldsymbol\beta_n)\Big),\\
\Kbf_i(\boldsymbol\beta_n)&=\text{diag}\Big(\tilde{\Rbf}^{-1}\Abf^{-1/2}_i(\boldsymbol\beta_n)(\Ybf_i-\boldsymbol\mu_i(\boldsymbol\beta_n))\Big),
\end{align*}
${\bf e}\trans _j$  length $m$ vector with $j$-th element 1 and 0 everywhere else, and ${\bf H}(\boldsymbol\beta_n,\Xbf_{ij}) = \partial \Jbf\trans (\boldsymbol\beta_n)\text{vec}(\Xbf_{ij}) / \partial \boldsymbol\beta\trans _n$.

Let $\Dbf_{ni}(\boldsymbol\beta_n)$ be defined the same as $\tilde{\Dbf}_{ni}(\boldsymbol\beta_n)$, but with $\tilde{\Rbf}$ replaced by $\widehat{\Rbf}$, for $i=1, \dots, 4$. It is sufficient to prove  
\begin{equation*}
\sup_{||\boldsymbol\beta_n-\boldsymbol\beta_0||\le \triangle \sqrt{p_n/n}}\sup_{\ubf}|\ubf\trans [\Dbf_{ni}(\boldsymbol\beta_n)-\tilde{\Dbf}_{ni}(\boldsymbol\beta_n)]\ubf|=O_p(\sqrt{p_nn})
\end{equation*}
for any $\ubf \in \real{R \sum_{d=1}^D p_d}$ such that $||\ubf||=1$, $i=1, \dots, 4$. 

For $i=1$, we have
\begin{align*}
&|\ubf\trans [\Dbf_{n1}(\boldsymbol\beta_n)-\tilde{\Dbf}_{n1}(\boldsymbol\beta_n)]\ubf|\\
\le& ||\ubf||^2\cdot||\widehat{\Rbf}^{-1}-\tilde{\Rbf}^{-1}||_F\cdot\lambda_{\max}(\Abf_i(\boldsymbol\beta_n))\cdot\lambda_{\max}\Big(\sum_{i=1}^n\Jbf\trans (\boldsymbol\beta_n)\text{vec}(\Xbf_i)\text{vec}\trans (\Xbf_i)\Jbf(\boldsymbol\beta_n)\Big).
\end{align*}
By Condition (A4*) and (A6*), $|\ubf\trans [\Dbf_{n1}(\boldsymbol\beta_n)-\tilde{\Dbf}_{n1}(\boldsymbol\beta_n)]\ubf|=O_p(\sqrt{p_nn})$ on the set $\{\boldsymbol\beta_n:||\boldsymbol\beta_n-\boldsymbol\beta_0||\le \triangle \sqrt{p_n/n}\}$.

For $i=2$, we have
\begin{align*}
&|\ubf\trans [\Dbf_{n2}(\boldsymbol\beta_n)-\tilde{\Dbf}_{n2}(\boldsymbol\beta_n)]\ubf|\\
\le &\frac{1}{2}|\ubf\trans \sum_{i=1}^n\Jbf\trans(\boldsymbol\beta_n)\text{vec}(\Xbf_i)\Abf^{1/2}_i(\boldsymbol\beta_n)(\tilde{\Rbf}^{-1}-\widehat{\Rbf}^{-1})\Abf^{-3/2}_i(\boldsymbol\beta_n){\bf C}_{i1}(\boldsymbol\beta_n)\Fbf_i(\boldsymbol\beta_n)\text{vec}\trans(\Xbf_i)\Jbf(\boldsymbol\beta_n)\ubf|\\
&+\frac{1}{2}|\ubf\trans \sum_{i=1}^n\Jbf\trans(\boldsymbol\beta_n)\text{vec}(\Xbf_i)\Abf^{1/2}_i(\boldsymbol\beta_n)(\tilde{\Rbf}^{-1}-\widehat{\Rbf}^{-1})\Abf^{-3/2}_i(\boldsymbol\beta_n){\bf C}_{i2}(\boldsymbol\beta_0)\Fbf_i(\boldsymbol\beta_n)\text{vec}\trans(\Xbf_i)\Jbf(\boldsymbol\beta_n)\ubf|\\
\triangleq&J_{n1}+J_{n2},
\end{align*}
where 
\begin{align*}
{\bf C}_{i1}(\boldsymbol\beta_n)&=\text{diag}\Big(\mu_{i1}(\boldsymbol\beta_0)-\mu_{i1}(\boldsymbol\beta_n), \dots, \mu_{im}(\boldsymbol\beta_0)-\mu_{im}(\boldsymbol\beta_n)\Big),\\
{\bf C}_{i2}(\boldsymbol\beta_0)&=\text{diag}\Big(Y_{i1}-\mu_{i1}(\boldsymbol\beta_0), \dots, Y_{im}-\mu_{im}(\boldsymbol\beta_0)\Big).
\end{align*} 

For $J_{n1}$, by Cauchy-Schwarz inequality for matrices with Frobenius norm, 
\begin{align*}
J_{n1}\le& C||\widehat{\Rbf}^{-1}-\tilde{\Rbf}^{-1}||_F\cdot\lambda_{\max}\Big(\sum_{i=1}^n\Jbf\trans(\boldsymbol\beta_n)\text{vec}(\Xbf_i)\text{vec}\trans(\Xbf_i)\Jbf(\boldsymbol\beta_n)\Big)\\
&\max_{i,j}|\mu_{ij}^{(1)}(\widetilde{\boldsymbol\beta}_n)|\cdot\mu_{ij}^{(2)}(\boldsymbol\beta_n)|\cdot\frac{\max_j \sigma_{ij}(\boldsymbol\beta_n)}{\min_j \sigma_{ij}(\boldsymbol\beta_n)},
\end{align*}
where $\widetilde{\boldsymbol\beta}_n$ is between $\boldsymbol\beta_n$ and $\boldsymbol\beta_0$. By Conditions (A3*), (A4*) and (A8*), $\sup_{||\boldsymbol\beta_n-\boldsymbol\beta_0||\le \triangle n^{-1/2}}J_{n1}=O_p(\sqrt{p_nn})$.

For $J_{n2}$, we decompose $\Abf^{1/2}_i(\boldsymbol\beta_n)$ into $\Abf^{1/2}_i(\boldsymbol\beta_0)$ and $[\Abf^{1/2}_i(\boldsymbol\beta_n)-\Abf^{1/2}_i(\boldsymbol\beta_0)]$. That is, 
\begin{align*}
2J_{n2}\le& |\ubf\trans \sum_{i=1}^n\Jbf\trans (\boldsymbol\beta_n)\text{vec}(\Xbf_i)\Abf^{1/2}_i(\boldsymbol\beta_0)(\tilde{\Rbf}^{-1}-\widehat{\Rbf}^{-1})\Abf^{-3/2}_i(\boldsymbol\beta_n)\\
&\times {\bf C}_{i2}(\boldsymbol\beta_0)\Fbf_i(\boldsymbol\beta_n)\text{vec}\trans (\Xbf_i)\Jbf(\boldsymbol\beta_n)\ubf|\\
+&|\sum_{i=1}^n\ubf\trans \Jbf\trans (\boldsymbol\beta_n)\text{vec}(\Xbf_i)[\Abf^{1/2}_i(\boldsymbol\beta_n)-\Abf^{1/2}_i(\boldsymbol\beta_0)](\tilde{\Rbf}^{-1}-\widehat{\Rbf}^{-1})\Abf^{-3/2}_i(\boldsymbol\beta_n)\\
&\times {\bf C}_{i2}(\boldsymbol\beta_0)\Fbf_i(\boldsymbol\beta_n)\text{vec}\trans (\Xbf_i)\Jbf(\boldsymbol\beta_n)\ubf|\\
\triangleq&J_{n21}+J_{n22}.
\end{align*}
Similarly to $J_{n1}$, it can be shown $\sup_{||\boldsymbol\beta_n-\boldsymbol\beta_0||\le \triangle \sqrt{p_n/n}}J_{n21}=O_p(\sqrt{np_n})$.

For $J_{n22}$, similar to the decomposition of $\Abf^{1/2}_i(\boldsymbol\beta_n)$, we can further decompose those terms involving $\boldsymbol\beta_n$ into terms that only depend on $\boldsymbol\beta_0$ and four other terms involving $\boldsymbol\beta_n$. On the set $\{\boldsymbol\beta_n:||\boldsymbol\beta_n-\boldsymbol\beta_0||\le \triangle \sqrt{p_n/n}\}$, similar to $J_{n1}$, under Conditions (A1*)-(A9*), all those terms involving $\boldsymbol\beta_n$ can be shown to be $O_p(\sqrt{p_nn})$. To complete the evaluation of $J_{n22}$ and hence $J_{n2}$, it suffices to show
\begin{align} \label{lemm2}
|\sum_{i=1}^n&\ubf\trans \Jbf\trans (\boldsymbol\beta_0)\text{vec}(\Xbf_i)\Abf^{1/2}_i(\boldsymbol\beta_0)(\tilde{\Rbf}^{-1}-\widehat{\Rbf}^{-1})\Abf^{-3/2}_i(\boldsymbol\beta_0) \notag\\
&\times{\bf C}_{i2}(\boldsymbol\beta_0)\Fbf_i(\boldsymbol\beta_0)\text{vec}\trans (\Xbf_i)\Jbf(\boldsymbol\beta_0)\ubf|=O_p(\sqrt{p_nn}).
\end{align}

Denote $L_n(\boldsymbol\beta_0)$ the left side of (\ref{lemm2}). Recall that $\epsilon_{ij}(\boldsymbol\beta_0)=\sigma_{ij}^{-1}(\boldsymbol\beta_0)(Y_{ij}-\mu_{ij}(\boldsymbol\beta_0))$. We have
\begin{align*}
&\mathbb{E}[||L_n(\boldsymbol\beta_0)||^2]=\text{Tr}[\mathbb{E}(L_n(\boldsymbol\beta_0)\trans L_n(\boldsymbol\beta_0))]\\
=&\sum_{i=1}^n\sum_{j=1}^m\sum_{k=1}^m\mathbb{E}[\epsilon_{ij}\epsilon_{ik}]\text{Tr}\Big[\Jbf\trans (\boldsymbol\beta_0)\text{vec}(\Xbf_i)\Abf^{1/2}_i(\boldsymbol\beta_0)(\tilde{\Rbf}^{-1}-\widehat{\Rbf}^{-1})\Abf^{-3/2}_i(\boldsymbol\beta_0){\bf e}_j{\bf e}_j\trans \Fbf_i(\boldsymbol\beta_0)\\
&\cdot\text{vec}\trans (\Xbf_i)\Jbf(\boldsymbol\beta_0)\Jbf\trans (\boldsymbol\beta_0)\text{vec}(\Xbf_i){\bf e}_k{\bf e}_k\trans \Abf^{-3/2}_i(\boldsymbol\beta_0)(\tilde{\Rbf}^{-1}-\widehat{\Rbf}^{-1})\Abf^{1/2}_i(\boldsymbol\beta_0)\text{vec}\trans (\Xbf_i)\Jbf(\boldsymbol\beta_0)\\
\le & C\sum_{i=1}^n\sum_{j=1}^m\sum_{k=1}^m||{\bf e}_j\trans \Fbf_i(\boldsymbol\beta_0)\text{vec}\trans (\Xbf_i)\Jbf(\boldsymbol\beta_0)||\cdot||\Jbf\trans (\boldsymbol\beta_0)\text{vec}(\Xbf_i)\Fbf_i(\boldsymbol\beta_0){\bf e}_k||\\
&\cdot||{\bf e}_k\trans \Abf^{-3/2}_i(\boldsymbol\beta_0)(\tilde{\Rbf}^{-1}-\widehat{\Rbf}^{-1})\Abf^{1/2}_i(\boldsymbol\beta_0)\text{vec}\trans (\Xbf_i)\Jbf(\boldsymbol\beta_0)||\\
&\cdot||\Jbf\trans (\boldsymbol\beta_0)\text{vec}(\Xbf_i)\Abf^{1/2}_i(\boldsymbol\beta_0)(\tilde{\Rbf}^{-1}-\widehat{\Rbf}^{-1})\Abf^{-3/2}_i(\boldsymbol\beta_0){\bf e}_j||.
\end{align*}
By Conditions (A1*), (A2*) and (A4*)-(A7*), $\mathbb{E}[||L_n(\boldsymbol\beta_0)||^2]=O(np_n)$. This implies $J_{n22}=O_p(\sqrt{p_nn})$ and hence 
\begin{equation*}
\sup_{||\boldsymbol\beta_n-\boldsymbol\beta_0||\le \triangle \sqrt{p_n/n}}\sup_{\ubf}|\ubf\trans [\Dbf_{n2}(\boldsymbol\beta_n)-\tilde{\Dbf}_{n2}(\boldsymbol\beta_n)]\ubf|=O_p(\sqrt{p_nn}).
\end{equation*}
Using similar decompositions, we can verify the results for $\Dbf_{n3}$ and $\Dbf_{n4}$, which completes the proof. 
\end{proof}

Based on Lemma \ref{Dbar}, we can further approximate ${\bf  \tilde{D}}_{n}(\boldsymbol\beta_n)$ by ${\bf  \tilde{D}}_{n1}(\boldsymbol\beta_n)$, which is easier to evaluate. Lemma \ref{Dn1} provides this approximation. 
\begin{lemma}\label{Dn1}
Under Conditions (A1*)-(A9*), for any $\triangle>0$ and $\ubf \in \real{R \sum_{d=1}^D p_d}$ such that $||\ubf||=1$, 
\begin{align}
\sup_{||\boldsymbol\beta_n-\boldsymbol\beta_0||=\triangle \sqrt{p_n/n}}\sup_{\ubf}|\ubf\trans [\tilde{\Dbf}_{n}(\boldsymbol\beta_n)-\tilde{\Dbf}_{n1}(\boldsymbol\beta_n)]\ubf|=O_p(n^{1/2}p_n), \label{lemma3.1}\\
\sup_{||\boldsymbol\beta_n-\boldsymbol\beta_0||=\triangle \sqrt{p_n/n}}\sup_{\ubf}|\ubf\trans [\tilde{\Dbf}_{n1}(\boldsymbol\beta_0)-\tilde{\Dbf}_{n1}(\boldsymbol\beta_n)]\ubf|=O_p(n^{1/2}p_n). \label{lemma3.2}
\end{align}
\end{lemma}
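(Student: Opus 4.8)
The plan is to handle the two displays separately, in both cases leaning on the explicit four-term decomposition $\tilde{\Dbf}_n=\tilde{\Dbf}_{n1}+\tilde{\Dbf}_{n2}+\tilde{\Dbf}_{n3}+\tilde{\Dbf}_{n4}$ already produced in Lemma~\ref{Dbar}. For (\ref{lemma3.1}), since $\tilde{\Dbf}_n-\tilde{\Dbf}_{n1}=\tilde{\Dbf}_{n2}+\tilde{\Dbf}_{n3}+\tilde{\Dbf}_{n4}$, it suffices to bound $\sup_{\boldsymbol\beta_n}\sup_{\ubf}|\ubf\trans\tilde{\Dbf}_{nk}(\boldsymbol\beta_n)\ubf|$ for $k=2,3,4$. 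The organizing principle is that $\tilde{\Dbf}_{n1}$ is the only information-type term, of order $n$ in operator norm by Conditions (A3$^*$) and (A8$^*$), whereas $\tilde{\Dbf}_{n2},\tilde{\Dbf}_{n3},\tilde{\Dbf}_{n4}$ each carry a residual factor $Y_{ij}-\mu_{ij}$ (through ${\bf C}_i$, $\Kbf_i$, or $\Ybf_i-\boldsymbol\mu_i$) that is mean zero at the truth; this is precisely what lowers their order from $n$ to $n^{1/2}p_n$.

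For each such term I would recycle the splitting that drives Lemma~\ref{Dbar}: write the residual at $\boldsymbol\beta_n$ as a deterministic bias ${\bf C}_{i1}(\boldsymbol\beta_n)=\text{diag}(\mu_{ij}(\boldsymbol\beta_0)-\mu_{ij}(\boldsymbol\beta_n))$ plus the genuine mean-zero residual ${\bf C}_{i2}(\boldsymbol\beta_0)=\text{diag}(Y_{ij}-\mu_{ij}(\boldsymbol\beta_0))$, and likewise peel each $\boldsymbol\beta_n$-dependent nuisance factor $\Jbf(\boldsymbol\beta_n)$, $\Abf^{1/2}_i(\boldsymbol\beta_n)$, $\Fbf_i(\boldsymbol\beta_n)$ into its value at $\boldsymbol\beta_0$ plus a perturbation. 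By the mean value theorem with Conditions (A1$^*$), (A7$^*$)--(A9$^*$), every bias and perturbation piece is $O(\|\boldsymbol\beta_n-\boldsymbol\beta_0\|)=O(\sqrt{p_n/n})$; combined with the $O(n)$ operator norm of the Gram matrix $\sum_i\Jbf\trans\vect(\Xbf_i)\vect\trans(\Xbf_i)\Jbf$ given by (A3$^*$), each contributes only $O(\sqrt{np_n})$, comfortably inside $O_p(n^{1/2}p_n)$. The genuine mean-zero piece is frozen at the fixed $\boldsymbol\beta_0$ and controlled by a second-moment computation exactly as in (\ref{lemm2}): conditioning on the $\Xbf_i$ and using independence across subjects together with the bounded correlation $\Rbf_0$, the squared Frobenius norm of the relevant matrix has expectation $O(np_n^2)$ under Conditions (A5$^*$)--(A6$^*$). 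Passing from the quadratic form to the supremum over $\{\ubf:\|\ubf\|=1\}$ through $\sup_{\ubf}|\ubf\trans A\ubf|\le\|A\|_{\textbf{F}}$ then yields $O_p(n^{1/2}p_n)$; the term $\tilde{\Dbf}_{n4}$ is treated the same way, now invoking the bounded sparse Hessian $\Hbf(\boldsymbol\beta_n,\Xbf_{ij})$ of Condition (A9$^*$) in place of one of the Gram factors.

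For (\ref{lemma3.2}) the task is a smoothness estimate for the single matrix $\tilde{\Dbf}_{n1}$, which depends on $\boldsymbol\beta$ only through $\Jbf(\boldsymbol\beta)$ and $\Abf^{1/2}_i(\boldsymbol\beta)$ and carries no residuals. I would expand $\ubf\trans[\tilde{\Dbf}_{n1}(\boldsymbol\beta_n)-\tilde{\Dbf}_{n1}(\boldsymbol\beta_0)]\ubf$ by the mean value theorem in $\boldsymbol\beta$, so that it equals $\sum_k(\boldsymbol\beta_n-\boldsymbol\beta_0)_k\,\partial_{\beta_k}(\ubf\trans\tilde{\Dbf}_{n1}\ubf)$ evaluated at an intermediate point. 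Differentiating the Gram form $\sum_i\wbf_i\trans\Abf^{1/2}_i\tilde{\Rbf}^{-1}\Abf^{1/2}_i\wbf_i$ with $\wbf_i=\vect\trans(\Xbf_i)\Jbf\ubf$, the derivative of $\Jbf$ is governed by the bounded Hessian (A9$^*$) and the derivative of $\Abf^{1/2}_i$ by the bounded $\mu^{(2)}$ and $\partial\theta_{ij}/\partial\boldsymbol\beta$ (A7$^*$, A8$^*$), so each coordinate derivative is $O(n)$ uniformly in $\ubf$. A Cauchy--Schwarz step over the $O(p_n)$ coordinates of $\boldsymbol\beta$ produces a factor $\sqrt{p_n}\,n$, and multiplying by the radius $\|\boldsymbol\beta_n-\boldsymbol\beta_0\|=\triangle\sqrt{p_n/n}$ lands exactly at $O_p(n^{1/2}p_n)$.

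The step I expect to be the main obstacle is the uniform bookkeeping of the powers of $p_n$ in the diverging-dimensional regime. The delicate point is that, although each per-subject vector $\Jbf\trans\vect(\Xbf_{ij})\ubf$ is only $O(\sqrt{p_n})$ while the aggregate Gram matrix stays $O(n)$ by Condition (A3$^*$), passing from a pointwise quadratic form to the supremum over the unit sphere (via the Frobenius bound in (\ref{lemma3.1})) or over the radius-$\sqrt{p_n/n}$ ball (via Cauchy--Schwarz over coordinates in (\ref{lemma3.2})) each costs an extra $\sqrt{p_n}$, and one must check these are the only such losses so the bound does not degrade past $n^{1/2}p_n$. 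Conditions (A5$^*$)--(A6$^*$) are what guarantee the mean-zero noise does not accumulate faster than the second-moment calculation predicts, and the uniform eigenvalue bounds (A3$^*$), (A9$^*$) over the shrinking ball are what make all the nuisance-factor estimates hold simultaneously over $\boldsymbol\beta_n$ rather than merely pointwise.
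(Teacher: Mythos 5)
Your proposal is correct and follows essentially the same route as the paper: for (\ref{lemma3.1}) you reduce to bounding $\tilde{\Dbf}_{n2},\tilde{\Dbf}_{n3},\tilde{\Dbf}_{n4}$, peel each $\boldsymbol\beta_n$-dependent factor into its value at $\boldsymbol\beta_0$ plus an $O(\sqrt{p_n/n})$ perturbation (this is exactly the paper's decomposition $\tilde{\Dbf}_{n2}(\boldsymbol\beta_n)=\tilde{\Dbf}_{n2}(\boldsymbol\beta_0)+\sum_{k}J_{nk}$), and control the frozen mean-zero piece at $\boldsymbol\beta_0$ by the second-moment computation of (\ref{lemm2}), just as the paper does by invoking the techniques of Lemma~\ref{Dbar}. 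For (\ref{lemma3.2}), your mean-value-theorem bound on the gradient of the quadratic form is only a repackaging of the paper's factor-by-factor telescoping of $\Jbf(\boldsymbol\beta)$ and $\Abf_i^{1/2}(\boldsymbol\beta)$; both rest on (A3$^*$) and (A7$^*$)--(A9$^*$) and land on the same $O_p(n^{1/2}p_n)$ rate.
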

\begin{proof}[Proof of Lemma \ref{Dn1}]
To prove (\ref{lemma3.1}), it is sufficient to show, for $i=2, 3, 4$, 
\begin{equation*}
\sup_{||\boldsymbol\beta_n-\boldsymbol\beta_0||=\triangle \sqrt{p_n/n}}\sup_{\ubf}|\ubf\trans \tilde{\Dbf}_{ni}(\boldsymbol\beta_n)\ubf|=O_p(n^{1/2}p_n).
\end{equation*}

For $\tilde{\Dbf}_{n2}(\boldsymbol\beta_n)$, we have the decomposition 
\begin{equation*}
\tilde{\Dbf}_{n2}(\boldsymbol\beta_n)=\tilde{\Dbf}_{n2}(\boldsymbol\beta_0)+\sum_{k=1}^6J_{n6}(\boldsymbol\beta_n),\\
\end{equation*}
where 
\begin{align*}
J_{n1}&=\frac{1}{2}\sum_{i=1}^n[\Jbf\trans (\boldsymbol\beta_n)-\Jbf\trans (\boldsymbol\beta_0)]\text{vec}(\Xbf_i)\Abf^{1/2}_i(\boldsymbol\beta_0)\tilde{\Rbf}^{-1}\Abf^{-3/2}_i(\boldsymbol\beta_0){\bf C}_i(\boldsymbol\beta_0)\Fbf_i(\boldsymbol\beta_0)\text{vec}\trans (\Xbf_i)\Jbf(\boldsymbol\beta_0),\\
J_{n2}&=\frac{1}{2}\sum_{i=1}^n\Jbf\trans (\boldsymbol\beta_n)\text{vec}(\Xbf_i)[\Abf^{1/2}_i(\boldsymbol\beta_n)-\Abf^{1/2}_i(\boldsymbol\beta_0)]\tilde{\Rbf}^{-1}\Abf^{-3/2}_i(\boldsymbol\beta_0){\bf C}_i(\boldsymbol\beta_0)\Fbf_i(\boldsymbol\beta_0)\text{vec}\trans (\Xbf_i)\Jbf(\boldsymbol\beta_0),\\
J_{n3}&=\frac{1}{2}\sum_{i=1}^n\Jbf\trans (\boldsymbol\beta_n)\text{vec}(\Xbf_i)\Abf^{1/2}_i(\boldsymbol\beta_n)\tilde{\Rbf}^{-1}[\Abf^{-3/2}_i(\boldsymbol\beta_n)-\Abf^{-3/2}_i(\boldsymbol\beta_0)]{\bf C}_i(\boldsymbol\beta_0)\Fbf_i(\boldsymbol\beta_0)\text{vec}\trans (\Xbf_i)\Jbf(\boldsymbol\beta_0),\\
J_{n4}&=\frac{1}{2}\sum_{i=1}^n\Jbf\trans (\boldsymbol\beta_n)\text{vec}(\Xbf_i)\Abf^{1/2}_i(\boldsymbol\beta_n)\tilde{\Rbf}^{-1}\Abf^{-3/2}_i(\boldsymbol\beta_0)[{\bf C}_i(\boldsymbol\beta_n)-{\bf C}_i(\boldsymbol\beta_0)]\Fbf_i(\boldsymbol\beta_0)\text{vec}\trans (\Xbf_i)\Jbf(\boldsymbol\beta_0),\\
J_{n5}&=\frac{1}{2}\sum_{i=1}^n\Jbf\trans (\boldsymbol\beta_n)\text{vec}(\Xbf_i)\Abf^{1/2}_i(\boldsymbol\beta_n)\tilde{\Rbf}^{-1}\Abf^{-3/2}_i(\boldsymbol\beta_n){\bf C}_i(\boldsymbol\beta_n)[\Fbf_i(\boldsymbol\beta_n)-\Fbf_i(\boldsymbol\beta_0)]\text{vec}\trans (\Xbf_i)\Jbf(\boldsymbol\beta_0),\\
J_{n6}&=\frac{1}{2}\sum_{i=1}^n\Jbf\trans (\boldsymbol\beta_n)\text{vec}(\Xbf_i)\Abf^{1/2}_i(\boldsymbol\beta_n)\tilde{\Rbf}^{-1}\Abf^{-3/2}_i(\boldsymbol\beta_n){\bf C}_i(\boldsymbol\beta_n)\Fbf_i(\boldsymbol\beta_n)\text{vec}\trans (\Xbf_i)[\Jbf(\boldsymbol\beta_n)-\Jbf(\boldsymbol\beta_0)].
\end{align*}
Using the same techniques as in Lemma \ref{Dbar}, it can be shown that $\tilde{\Dbf}_{n2}(\boldsymbol\beta_0)$ and $J_{ni}$ are $O_p(n^{1/2}p_n)$, $i=1, \dots, 6$. We can prove $\tilde{\Dbf}_{n3}(\boldsymbol\beta_0)$ and $\tilde{\Dbf}_{n4}(\boldsymbol\beta_0)$ in the same way, which completes the proof of (\ref{lemma3.1}).

To prove (\ref{lemma3.2}), note that
\begin{align*}
&|\ubf\trans [\tilde{\Dbf}_{n1}(\boldsymbol\beta_0)-\tilde{\Dbf}_{n1}(\boldsymbol\beta_n)]\ubf|\\
\le& |\ubf\trans [\Jbf\trans (\boldsymbol\beta_0)-\Jbf\trans (\boldsymbol\beta_n)]\text{vec}(\Xbf_i)\Abf^{1/2}_i(\boldsymbol\beta_n)\tilde{\Rbf}^{-1}\Abf^{1/2}_i(\boldsymbol\beta_n)\text{vec}\trans (\Xbf_i)\Jbf(\boldsymbol\beta_n)\ubf|\\
&+ |\ubf\trans \Jbf\trans (\boldsymbol\beta_0)\text{vec}(\Xbf_i)[\Abf^{1/2}_i(\boldsymbol\beta_0)-\Abf^{1/2}_i(\boldsymbol\beta_n)]\tilde{\Rbf}^{-1}\Abf^{1/2}_i(\boldsymbol\beta_n)\text{vec}\trans (\Xbf_i)\Jbf(\boldsymbol\beta_n)\ubf|\\
&+ |\ubf\trans \Jbf\trans (\boldsymbol\beta_0)\text{vec}(\Xbf_i)\Abf^{1/2}_i(\boldsymbol\beta_0)\tilde{\Rbf}^{-1}[\Abf^{1/2}_i(\boldsymbol\beta_0)-\Abf^{1/2}_i(\boldsymbol\beta_n)]\text{vec}\trans (\Xbf_i)\Jbf(\boldsymbol\beta_n)\ubf|\\
&+ |\ubf\trans \Jbf\trans (\boldsymbol\beta_0)\text{vec}(\Xbf_i)\Abf^{1/2}_i(\boldsymbol\beta_0)\tilde{\Rbf}^{-1}\Abf^{1/2}_i(\boldsymbol\beta_0)\text{vec}\trans (\Xbf_i)[\Jbf(\boldsymbol\beta_0)-\Jbf(\boldsymbol\beta_n)]\ubf|.
\end{align*}
The rest of the proof is similar to Lemma \ref{Dbar} and thus is omitted here. \end{proof}

\noindent
\textbf{Proof of Theorem 3}
\noindent
\begin{proof}
\citet{Wang2011} gave a sufficient condition for the existence and consistency of a sequence of root $\widehat{\boldsymbol\beta}_n$of $\sbf_n(\boldsymbol\beta_n)=0$, namely, 
\begin{equation} \label{suff}
P(\sup_{||\boldsymbol\beta_n-\boldsymbol\beta_0||=\triangle \sqrt{p_n/n}}(\boldsymbol\beta_n-\boldsymbol\beta_0)\trans \sbf_n(\boldsymbol\beta_n)<0)\ge 1-\epsilon
\end{equation}
with $\forall \epsilon>0$ and a constant $\triangle>0$. 
To verify ($\ref{suff}$), the main idea is to approximate $\sbf_n(\boldsymbol\beta_n)$ by $\tilde{\sbf}_n(\boldsymbol\beta_n)$, whose moments are easier to evaluate. 

By direct calculation, 
\begin{eqnarray*}
(\boldsymbol\beta_n-\boldsymbol\beta_0)\trans \sbf_n(\boldsymbol\beta_n) 
&=&(\boldsymbol\beta_n-\boldsymbol\beta_0)\trans \sbf_n(\boldsymbol\beta_0)-(\boldsymbol\beta_n-\boldsymbol\beta_0)\trans \Dbf_n(\boldsymbol\beta^*_n)(\boldsymbol\beta_n-\boldsymbol\beta_0)\\
& \triangleq & I_{n1}+I_{n2},
\end{eqnarray*}
where $\boldsymbol\beta^*_n=t\boldsymbol\beta_n+(1-t)\boldsymbol\beta^0$ for some $0<t<1$. Further decompose $I_{n1}$ into 
\begin{align*}
I_{n1}&=(\boldsymbol\beta_n-\boldsymbol\beta_0)\trans \tilde{\sbf}_n(\boldsymbol\beta_0)+(\boldsymbol\beta_n-\boldsymbol\beta_0)\trans [\sbf_n(\boldsymbol\beta_0)-\tilde{\sbf}_n(\boldsymbol\beta_0)]\\
&\triangleq I_{n11}+I_{n12}.
\end{align*}
Note that $I_{n11}\le \triangle \sqrt{p_n/n}\cdot||\tilde{\sbf}_n(\boldsymbol\beta_0)||$, where
\begin{align*}
&\mathbb{E}[||\tilde{\sbf}_n(\boldsymbol\beta_0)||^2]\\
=&\mathbb{E}\Big\{\sum_{i=1}^n\boldsymbol\epsilon_i\trans \tilde{\Rbf}^{-1}\Abf^{1/2}_i(\boldsymbol\beta_0)\text{vec}\trans (\Xbf_i)\Jbf(\boldsymbol\beta_n)\Jbf\trans (\boldsymbol\beta_n)\text{vec}(\Xbf_i)\Abf^{1/2}_i(\boldsymbol\beta_0)\tilde{\Rbf}^{-1}\boldsymbol\epsilon_i\Big\}\\
\le &C\cdot\text{Tr}\Big(\Jbf\trans (\boldsymbol\beta_n)\text{vec}(\Xbf_i)\text{vec}\trans (\Xbf_i)\Jbf(\boldsymbol\beta_n)\Big)\\
=&C\sum_{j=1}^m\cdot\text{Tr}\Big(\Jbf\trans (\boldsymbol\beta_n)\text{vec}(\Xbf_{ij})\text{vec}\trans (\Xbf_{ij})\Jbf(\boldsymbol\beta_n)\Big)=O(np_n)
\end{align*}
for some constant $C>0$. This implies that $I_{n11}=\triangle \sqrt{p_n/n}O_p(\sqrt{np_n})=\triangle O_p(p_n)$. For $I_{n12}$, by Lemma \ref{Sbar},
\begin{equation*}
I_{n12}\le||\boldsymbol\beta_n-\boldsymbol\beta_0||\cdot||\sbf_n(\boldsymbol\beta_0)-\tilde{\sbf}_n(\boldsymbol\beta_0)||=o_p(p_n).
\end{equation*}
Therefore, $I_{n1}$ is dominated in probability by $I_{n11}$.

For $I_{n2}$ ,we decompose it into 
\begin{align*}
I_{n2}=&-(\boldsymbol\beta_n-\boldsymbol\beta_0)\trans \tilde{\Dbf}_n(\boldsymbol\beta^*_n)(\boldsymbol\beta_n-\boldsymbol\beta_0)\\
&-(\boldsymbol\beta_n-\boldsymbol\beta_0)\trans [\Dbf_n(\boldsymbol\beta^*_n)-\tilde{\Dbf}_n(\boldsymbol\beta^*_n)](\boldsymbol\beta_n-\boldsymbol\beta_0)\\
\triangleq&I_{n21}+I_{n22}.
\end{align*} 
By Lemma \ref{Dbar}, it can be easily checked that $I_{n22}=o_p(p_n)$. Next for $I_{n21}$,
\begin{align*}
I_{n21}=&-(\boldsymbol\beta_n-\boldsymbol\beta_0)\trans \tilde{\Dbf}_{n1}(\boldsymbol\beta_0)(\boldsymbol\beta_n-\boldsymbol\beta_0)\\
&-(\boldsymbol\beta_n-\boldsymbol\beta_0)\trans [\tilde{\Dbf}_{n1}(\boldsymbol\beta^*_n)-\tilde{\Dbf}_{n1}(\boldsymbol\beta_0)](\boldsymbol\beta_n-\boldsymbol\beta_0)\\
&-(\boldsymbol\beta_n-\boldsymbol\beta_0)\trans [\tilde{\Dbf}_{n}(\boldsymbol\beta^*_n)-\tilde{\Dbf}_{n1}(\boldsymbol\beta^*_n)](\boldsymbol\beta_n-\boldsymbol\beta_0)\\
\triangleq&I^1_{n21}+I^2_{n21}+I^3_{n21}.
\end{align*}

We next show that $I_{n21}$ is dominated in probability by $I^1_{n21}$. Note that by Condition (A3*), (A4*) and (A8*), 
\begin{align*}
I^1_{n21}=&-(\boldsymbol\beta_n-\boldsymbol\beta_0)\trans \Big[\sum_{i=1}^n\Jbf\trans (\boldsymbol\beta_n)\text{vec}(\Xbf_i)\Abf^{1/2}_i(\boldsymbol\beta_n)\tilde{\Rbf}^{-1}\Abf^{1/2}_i(\boldsymbol\beta_n)
\text{vec}\trans (\Xbf_i)\Jbf(\boldsymbol\beta_n)\Big](\boldsymbol\beta_n-\boldsymbol\beta_0)\\
\le&-n^{-1}\triangle^2\min_i\lambda_{\min}(\Abf_i(\boldsymbol\beta_n))\lambda_{\min}\Big(\sum_{i=1}^n\Jbf\trans (\boldsymbol\beta_n)\text{vec}(\Xbf_i)\text{vec}\trans (\Xbf_i)\Jbf(\boldsymbol\beta_n)\Big)\lambda_{\min}(\tilde{\Rbf}^{-1})\\
\le &-C\triangle^2p_n,
\end{align*}
for some constant $C>0$. By Lemma \ref{Dn1}, it can be checked directly that both $I^2_{n21}$ and $I^3_{n21}$ are $o_p(p_n)$. 

Therefore, the sign of $(\boldsymbol\beta_n-\boldsymbol\beta_0)\trans \sbf_n(\boldsymbol\beta_n)$ is determined by in probability by $I_{n11}+I^1_{n21}$ and is negative for sufficiently large $\triangle$, which completes the proof.\end{proof}

\noindent
\textbf{Proof of Theorem 4}
\noindent
\begin{proof}
We first show that the normalized $\tilde{\sbf}_n(\boldsymbol\beta_0)$ has an asymptotic normal distribution. That is, for any $\bbf \in \real{R \sum_{d=1}^D p_d}$ such that $||\bbf||=1$, 
\begin{equation} \label{CLT}
\bbf\trans \tilde{\Mbf}^{-1/2}_n(\boldsymbol\beta_0)\tilde{\sbf}_n(\boldsymbol\beta_0)\to N(0, 1),
\end{equation}
where $\tilde{\Mbf}_n(\boldsymbol\beta_0)=\text{Var}(\tilde{\sbf}_n(\boldsymbol\beta_0))$. 

Denote $\bbf\trans \tilde{\Mbf}^{-1/2}_n(\boldsymbol\beta_0)\tilde{\sbf}_n(\boldsymbol\beta_0)=\sum_{i=1}^nZ_{ni}$, where 
\begin{equation*}
Z_{ni}=\bbf\trans \tilde{\Mbf}^{-1/2}_n(\boldsymbol\beta_0)\Jbf\trans (\boldsymbol\beta_0)\text{vec}(\Xbf_i)\Abf^{1/2}_i(\boldsymbol\beta_0)\tilde{\Rbf}^{-1}\boldsymbol\epsilon_i(\boldsymbol\beta_0),
\end{equation*}
where $\boldsymbol\epsilon_i(\boldsymbol\beta_0)=\Abf^{-1/2}_i(\boldsymbol\beta_0)({\bf Y}_i-\boldsymbol\mu_i(\boldsymbol\beta_0))$. Note that $\mathbb{E}(Z_{ni})=0$, $\text{Var}(\sum_{i=1}^nZ_{ni})=1$. To prove (\ref{CLT}), it suffices to check the Lyapunov condition. That is, for some $\delta>0$,
\begin{equation*}
\sum_{i=1}^n\mathbb{E}\Big(|Z_{ni}|^{2+\delta}\Big) \to 0,
\end{equation*}
as $n \to \infty$. By Cauchy-Schwarz inequality, 
\begin{align*}
Z_{ni}^2\le \lambda_{\max}(\tilde{\Rbf}^{-2})\lambda_{\max}(\Abf_i(\boldsymbol\beta_0))||\boldsymbol\epsilon_i(\boldsymbol\beta_0)||^2\gamma_{ni},
\end{align*}
where $\gamma_{ni}\triangleq \bbf\trans \tilde{\Mbf}^{-1/2}_n(\boldsymbol\beta_0)\Jbf\trans (\boldsymbol\beta_0)\text{vec}(\Xbf_i)\text{vec}\trans (\Xbf_i)\Jbf(\boldsymbol\beta_0)\tilde{\Mbf}^{-1/2}_n(\boldsymbol\beta_0)\bbf$. To evaluate $\max_{1 \le i \le n}\gamma_{ni}$, we need to evaluate $\lambda^{-1}_{\min}(\tilde{\Mbf}_n(\boldsymbol\beta_0))$. Note that
\begin{align*}
\bbf\trans \tilde{\Mbf}_n(\boldsymbol\beta_0)\bbf \ge &C\bbf\trans \Big(\sum_{i=1}^n\Jbf\trans (\boldsymbol\beta_0)\text{vec}(\Xbf_i)\text{vec}\trans (\Xbf_i)\Jbf(\boldsymbol\beta_0)\Big)\bbf\\
\ge &C\lambda_{\min}\Big(\sum_{i=1}^n\Jbf\trans (\boldsymbol\beta_0)\text{vec}(\Xbf_i)\text{vec}\trans (\Xbf_i)\Jbf(\boldsymbol\beta_0)\Big).
\end{align*}
By Condition (A1*) and (A3*), $\lambda^{-1}_{\min}(\tilde{\Mbf}_n(\boldsymbol\beta_0))=O(n^{-1})$ and hence $\max_{1 \le i \le n}\gamma_{ni}=o(1)$. 

It follows that for any $\delta>0$,
\begin{align*}
\sum_{i=1}^n\mathbb{E}\Big(|Z_{ni}|^{2+\delta}\Big)&\le \sum_{i=1}^n\mathbb{E}\Big(C^{1+\delta/2}\gamma_{ni}^{1+\delta/2}||\boldsymbol\epsilon_i(\boldsymbol\beta_0)||^{2+\delta}\Big)\\ 
&\le C(\max_{1 \le i \le n}\gamma_{ni})^{\delta/2}\sum_{i=1}^n\bbf\trans \tilde{\Mbf}^{-1/2}_n(\boldsymbol\beta_0)\Jbf\trans (\boldsymbol\beta_0)\text{vec}(\Xbf_i)\text{vec}\trans (\Xbf_i)\Jbf(\boldsymbol\beta_0)\tilde{\Mbf}^{-1/2}_n(\boldsymbol\beta_0)\bbf\\
&\le C(\max_{1 \le i \le n}\gamma_{ni})^{\delta/2}\lambda_{\max}\big(\sum_{i=1}^n\Jbf\trans (\boldsymbol\beta_0)\text{vec}(\Xbf_i)\text{vec}\trans (\Xbf_i)\Jbf(\boldsymbol\beta_0)\big)\lambda^{-1}_{\min}\big(\tilde{\Mbf}_n(\boldsymbol\beta_0)\big)\\
&=o(1)O(n)O(n^{-1})=o(1),
\end{align*}
which completes the proof of (\ref{CLT}). 

To prove Theorem 2, note that by the fact $\sbf_n(\widehat{\boldsymbol\beta}_n)=0$, we have $\sbf_n(\boldsymbol\beta_0)=\Dbf_n(\boldsymbol\beta^*_n)(\widehat{\boldsymbol\beta}_n-\boldsymbol\beta_0)$ for some $\boldsymbol\beta^*_n$ between $\widehat{\boldsymbol\beta}_n$ and $\boldsymbol\beta_0$. Hence,
\begin{align*}
&\bbf\trans \tilde{\Mbf}^{-1/2}_n(\boldsymbol\beta_0)\tilde{\sbf}_n(\boldsymbol\beta_0)\\
=&\bbf\trans \tilde{\Mbf}^{-1/2}_n(\boldsymbol\beta_0)\tilde{\Dbf}_{n1}(\boldsymbol\beta_0)(\widehat{\boldsymbol\beta}_n-\boldsymbol\beta_0)\\
&+\bbf\trans \tilde{\Mbf}^{-1/2}_n(\boldsymbol\beta_0)[\Dbf_n(\boldsymbol\beta^*_n)-\tilde{\Dbf}_{n1}(\boldsymbol\beta_0)](\widehat{\boldsymbol\beta}_n-\boldsymbol\beta_0)\\
&+\bbf\trans \tilde{\Mbf}^{-1/2}_n(\boldsymbol\beta_0)[\tilde{\sbf}_n(\boldsymbol\beta_0)-\sbf_n(\boldsymbol\beta_0)]\\
=&J_{n1}+J_{n2}(\boldsymbol\beta^*_n)+J_{n3}(\boldsymbol\beta_0).
\end{align*}
By (\ref{CLT}), it is sufficient to prove that both $\sup_{||\boldsymbol\beta_n-\boldsymbol\beta_0||\le \triangle \sqrt{p_n/n}}|J_{n2}(\boldsymbol\beta_n)|$ and $|J_{n3}(\boldsymbol\beta_0)|$ are $o_p(1)$. 

For $J_{n3}$, recall that $||\tilde{\sbf}_n(\boldsymbol\beta_0)-\sbf_n(\boldsymbol\beta_0)||=O_p(1)$ from Lemma \ref{Sbar}. Using the previous result that $\lambda^{-1}_{\min}(\tilde{\Mbf}_n(\boldsymbol\beta_0))=O(n^{-1})$, it can be easily checked that $J_{n3}^2=o_p(1)$ and hence $|J_{n3}|=o_p(1)$. 

For $J_{n2}$, we have
\begin{align*}
&\sup_{||\boldsymbol\beta_n-\boldsymbol\beta_0||\le \triangle \sqrt{p_n/n}}|J_{n2}(\boldsymbol\beta_n)|\\
\le &\sup_{||\boldsymbol\beta_n-\boldsymbol\beta_0||\le \triangle \sqrt{p_n/n}}\bbf\trans \tilde{\Mbf}^{-1/2}_n(\boldsymbol\beta_0)[\Dbf_n(\boldsymbol\beta_n)-\tilde{\Dbf}_{n}(\boldsymbol\beta_n)](\widehat{\boldsymbol\beta}_n-\boldsymbol\beta_0)\\
&+\sup_{||\boldsymbol\beta_n-\boldsymbol\beta_0||\le \triangle \sqrt{p_n/n}}\bbf\trans \tilde{\Mbf}^{-1/2}_n(\boldsymbol\beta_0)[\tilde{\Dbf}_n(\boldsymbol\beta_n)-\tilde{\Dbf}_{n1}(\boldsymbol\beta_n)](\widehat{\boldsymbol\beta}_n-\boldsymbol\beta_0)\\
&+\sup_{||\boldsymbol\beta_n-\boldsymbol\beta_0||\le \triangle \sqrt{p_n/n}}\bbf\trans \tilde{\Mbf}^{-1/2}_n(\boldsymbol\beta_0)[\tilde{\Dbf}_{n1}(\boldsymbol\beta_n)-\tilde{\Dbf}_{n1}(\boldsymbol\beta_0)](\widehat{\boldsymbol\beta}_n-\boldsymbol\beta_0)\\
\triangleq&I_{n1}+I_{n2}+I_{n3}.
\end{align*}
By Theorem 1, Lemma \ref{Dn1} and the fact $\lambda^{-1}_{\min}(\tilde{\Mbf}_n(\boldsymbol\beta_0))=O(n^{-1})$, it can be seen that 
\begin{align*}
I_{n1}\le &C\cdot\lambda_{\max}(\Dbf_n(\boldsymbol\beta_n)-\tilde{\Dbf}_{n}(\boldsymbol\beta_n))\lambda^{-1/2}_{\min}(\tilde{\Mbf}_n(\boldsymbol\beta_0))\sqrt{p_n/n}\\
=&O_p(\sqrt{n}p_n)O(n^{-1/2})O_p(\sqrt{p_n/n})=O_p(p_n^{3/2}n^{-1/2}).
\end{align*}
Using the stronger assumption that $p=o(n^{-1/3})$, $I_{n1}=o_p(1)$. Similarly, by Lemma \ref{Dn1}, we have $I_{n2}=o_p(1)$ and $I_{n3}=o_p(1)$. Therefore $J_{n1}$ has the same  asymptotic distribution as in ($\ref{CLT}$), which completes the proof.  
\end{proof}

\end{document}